\newenvironment{subroutine}[1][htb]
  {
   \begin{algorithm}[#1]%
  }{\end{algorithm}}
\newcommand{\quotes}[1]{``#1''}
\newcommand{\eq}[1]{(\ref{eq:#1})}
\newcommand{\ionq}{\textrm{IonQ }}
\newcommand{\nisq}{\textrm{NISQ }}
\newcommand{\CNOT}{\operatorname{CNOT}}
\newcommand{\CNOTs}{\operatorname{CNOTs}}
\newcommand{\RBS}{\operatorname{RBS}}
\newcommand{\gRBS}{\operatorname{gRBS}}
\newcommand{\LOAD}{\operatorname{Load}}
\newcommand{\vecx}{\mathbf{x}}
\newcommand{\normx}{\norm{\vecx}}
\newcommand{\vecy}{\mathbf{y}}
\newcommand{\bin}{\mathcolor{blue}{in}}
\newcommand{\binbf}{\mathcolor{blue}{\mathbf{in}}}
\newcommand{\tout}{\mathcolor{teal}{out}}
\newcommand{\toutbf}{\mathcolor{teal}{\mathbf{out}}}
\newcommand{\vctrl}{\mathcolor{violet}{ctrl}}
\newcommand{\vctrlbf}{\mathcolor{violet}{\mathbf{ctrl}}}
\newcommand{\bctrlbf}{\mathcolor{black}{\overline{\mathbf{ctrl}}}}
\DeclareMathOperator{\atantwo}{atan2}
\newtheorem{theorem}{Theorem}[section]
\newtheorem{lemma}[theorem]{Lemma}
\newtheorem{definition}[theorem]{Definition}
\newcommand{\TII}{\affiliation{Quantum Research Center, Technology Innovation Institute, Abu Dhabi, UAE}}
\newcommand{\UFRJ}{\affiliation{Instituto de F\'{i}sica, Universidade Federal do Rio de Janeiro, P.O. Box 68528, Rio de Janeiro, Rio de Janeiro 21941-972, Brazil}}
\newcommand{\UB}{\affiliation{Departament de F\'{i}sica Qu\`{a}ntica i Astrof\'{i}sica and Institut de Ci\`{e}ncies del Cosmos (ICCUB), Universitat de Barcelona, Barcelona, Spain.}}
\begin{document}

\title{Quantum encoder for fixed Hamming-weight subspaces}

\author{Renato M. S. Farias}
\email{renato.msf@gmail.com}
\TII
\UFRJ

\author{Thiago O. Maciel}
\TII

\author{Giancarlo Camilo}
\TII

\author{Ruge Lin}
\TII
\UB

\author{Sergi Ramos-Calderer}
\TII
\UB

\author{Leandro Aolita}
\TII

\begin{abstract}
We present an exact $n$-qubit computational-basis amplitude encoder of real- or complex-valued data vectors of $d=\binom{n}{k}$ components into a subspace of fixed Hamming weight $k$. 
This represents a polynomial space compression of degree $k$. 
The circuit is optimal in that it expresses an arbitrary data vector using only $d-1$ (controlled) Reconfigurable Beam Splitter (RBS) gates and is constructed by an efficient classical algorithm that sequentially generates all bitstrings of weight $k$ and identifies the gates that superpose the corresponding states with the correct amplitudes. 
An explicit compilation into CNOTs and single-qubit gates is presented, with the total CNOT-gate count of $\mathcal{O}(k\, d)$ provided in analytical form. 
In addition, we show how to load data in the binary basis by sequentially stacking encoders of different Hamming weights using $\mathcal{O}(d\,\log(d))$ CNOT gates. 
Moreover, using generalized RBS gates that mix states of different Hamming weights, we extend the construction to efficiently encode arbitrary sparse vectors.
Experimentally, we perform a proof-of-principle demonstration of our scheme on a commercial trapped-ion quantum computer. 
We successfully upload a $q$-Gaussian probability distribution in the non-log-concave regime with $n = 6$ and $k = 2$.
We also showcase how the effect of hardware noise can be alleviated by quantum error mitigation.
Numerically, we show how our encoder can improve the performance of variational quantum algorithms for problems that include particle-preserving symmetries.
Our results constitute a versatile framework for quantum data compression with various potential applications in fields such as quantum chemistry, quantum machine learning, and constrained combinatorial optimizations.
\end{abstract}

\maketitle

\section{Introduction}
\label{sec:introduction}

Amplitude encoding schemes in the basis of Hamming-weight-$k$ (HW-$k$) states of $n$ qubits correspond to an interesting regime of polynomial space compression $n\in\mathcal{O}\big(k\,d^{1/k}\big)$ for data vectors of size $d$ \cite{Anselmetti2021, Arrazola2022, Monbroussou2023, Cherrat2024}. 
They are the middle-ground between two distinct encoding scenarios. 
On one end, there is an amplitude encoding scheme of zero compression that efficiently loads data in the unary basis \cite{RamosCalderer2021, Johri2021, Kerenidis2022, Kerenidis2022subspace, Zhang2023}. 
On the other end, there are techniques to load amplitudes in binary basis, which provides exponential compression \cite{Ventura1999, Long2001, Grover2002, Lloyd2018, Dallaire2018, Mitarai2019, Holmes2020, Araujo2021, Jaques2023}, but requires a number of two-qubit gates that is exponential in $n$ \cite{Barenco1995, Kitaev1997, Boykin1999, Plesch2011}.
Moreover, HW-$k$ encoders have the additional benefit of being the natural subspace for applications in fields such as quantum chemistry, due to the particle-preserving nature of typical Hamiltonians \cite{Anselmetti2021, Huggins2021, Arrazola2022, Gibbs2022, Zhao2023}.
They have also been used in the context of quantum machine learning \cite{Johri2021, Kerenidis2022, Kerenidis2022subspace, Monbroussou2023, Jain2024, Cherrat2024} and quantum finance \cite{RamosCalderer2021, Zhang2023, He2023, Cherrat2023}. 
When used as variational circuits, they take advantage of the reduced subspace to mitigate the barren plateau problem \cite{Ragone2024, Cherrat2023, Jain2024}.

Current proposals of HW-$k$ encoders present some drawbacks in practice.
For instance, Refs. \cite{Kerenidis2022subspace, Monbroussou2023} use the aforementioned unary-basis encoders as subroutines. 
While allowing for the generalization to HW $k > 1$ using only two-qubit Givens rotations, these encoders require orthogonalization of the input data \cite{Kerenidis2022} or solving non-linear systems of $\binom{n}{k} - 1$ equations to compute the rotation angles \cite{Monbroussou2023}. 
Here we address these classical overheads while still performing an exact encoding.

We present an exact, ancilla-free amplitude encoder for real and complex data in HW-$k$ subspaces. 
We detail a classical algorithm that, given an initial bitstring of Hamming weight $k$ and a data vector of size $d$, outputs instructions for a sequence of (controlled) Givens rotations and their respective angles in hyperspherical coordinates of the data vector, hence efficient to compute. 
An explicit compilation into $\CNOTs$ and single-qubit gates is presented, with the total $\CNOT$-gate count of $\mathcal{O}(k\, d)$ provided in analytical form. 
Moreover, our HW-$k$ encoder can be used as a subroutine to power more complex algorithms. 
For instance, by combining all Hamming weights $k \leq n$ we obtain a binary encoder using $\mathcal{O}(d\,\log(d))$ $\CNOT$ gates. 
Using a generalized RBS gate designed to superpose states with different Hamming weights, we extend our methods and derive a procedure to load data in the paradigm of the classical sparse-access model.
Experimentally, we deployed a proof-of-concept instance on the \texttt{Aria-1} ion-trap quantum processor from \ionq \cite{IONQ}, uploading a $q$-Gaussian probability distribution. 
The experimental results are further enhanced using a well-known error mitigation technique called Clifford Data Regression \cite{czarnik2021, Lowe2021}.
Numerically, we performed two experiments: first, we analyzed the effects of circuit noise in the fidelity of the encoder under a simple noise model;
then, we showed how the state encoder can also be used as a variational quantum ansatz, being a promising candidate of ansatz for problems with particle-preserving symmetry.

The paper is structured as follows. 
In Sec. \ref{sec:preliminaries} we introduce the gates used in our algorithm. 
Then, we present our framework for HW-$k$ encoders in Sec. \ref{sec:hamming_weight}. 
Secs. \ref{sec:sparse} and  \ref{sec:binary}, respectively, expand on the sparse and binary encoders using the fixed HW-$k$ encoder as a subroutine. 
In Sec. \ref{sec:hardware}, we show the deployment of our HW-$k$ encoder on quantum hardware, and we conclude in Sec. \ref{sec:conclusions}.

\section{Preliminaries}
\label{sec:preliminaries}

\paragraph{Notation.} Let $b\in\{0,1\}^{\otimes n}$ be a bitstring of length $n$ and $\abs{b}$ its Hamming weight (HW); given two bitstrings $b$ and $b^{\prime}$, their Hamming distance is $\abs{b \oplus b^{\prime}}$, where $\oplus$ is addition mod $2$. For convenience, we refer to the HW of a computational basis state $\ket{b}$ as being the HW of the bitstring $b$, and both are used interchangeably. 
The value of the $j$-th bit of a bitstring $b$ is denoted by $b(j)$. 
Given a set $\mathcal{I}$, we call $\mathcal{I}_c$ its complementary set. We also use the shorthand notation $[d] \coloneqq \{1, 2, \ldots, d\}$. 

\begin{figure}[t!]
\centering
\includegraphics[scale=0.67]{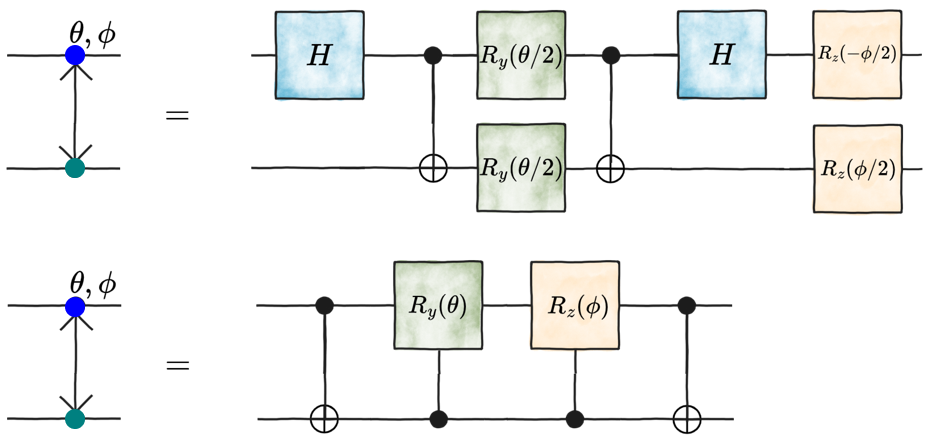}
\caption{
    \textbf{Two possible compilations of the complex $\RBS$ gate $R^{\bin}_{\tout}(\theta, \, \phi)$.} 
    Colors indicate the input (blue) and output (green) qubits.
    \textit{(Top}) Compilation using two $\CNOTs$ and single-qubit gates; 
    \textit{(Bottom)} Compilation using two $\CNOTs$ and two controlled-Pauli rotations; 
    these rotations can be merged into a single controlled-$\operatorname{SU}(2)$ gate (see App. \ref{app:compilation_one_gate}). 
} 
\label{fig:rbscomplex}
\end{figure}

Here we introduce gates that superpose two computational basis states with tunable amplitudes. These will be the building blocks for our encoders. 
We define the single-qubit Pauli-$Y$ rotation as $R_{y}(\theta) \coloneqq e^{-i\theta Y}$.
This gate superposes the two computational basis states $\ket{0}$ and $\ket{1}$ of a single-qubit subspace by acting as 
\begin{equation}\label{eq:RY}
\begin{aligned}
    R_{y}(\theta)\ket{0} & = \cos(\theta) \ket{0} + \sin(\theta) \ket{1} \\
    R_{y}(\theta)\ket{1} & = \cos(\theta) \ket{1} - \sin(\theta) \ket{0}\,.
\end{aligned}
\end{equation}
When acting on a multi-qubit state, (controlled) $R_{y}(\theta)$ rotations can be used to superpose any two computational basis states of Hamming distance $1$.

Another gate of interest is one creating a superposition between two computational basis states of Hamming distance $2$ by acting only on a two-qubit subspace. 
Denoting by $\bin$ ($\tout$) the bit that has value $1$ ($0$) in the first state and $0$ ($1$) in the second, we define the two-qubit gate $R^{\bin}_{\tout}(\theta, \, \phi)$ acting on qubits $\bin$ and $\tout$ as
\begin{equation}\label{eq:complexRBS}
\begin{aligned}
    R^{\bin}_{\tout}(\theta, \, \phi) \ket{10} &= e^{i \phi} \cos(\theta) \ket{10} + e^{-i \phi} \sin(\theta) \ket{01} \\
    R^{\bin}_{\tout}(\theta, \, \phi) \ket{01} &= e^{-i \phi} \cos(\theta) \ket{01} - e^{i \phi} \sin(\theta) \ket{10}
\end{aligned}    
\end{equation}
and as the identity elsewhere. 
We call $R^{\bin}_{\tout}(\theta, \, \phi)$ a \emph{complex Reconfigurable Beam Splitter} ($\RBS$) gate,  reducing to the usual $\RBS$ gate for $\phi=0$ \footnote{The gate $R^{\bin}_{\tout}(-\theta,\,0)$ is preferred by some authors and referred to as a Givens rotation.}. 
We also denote $R^{\bin}_{\tout}(\theta, \, 0)$ as $R^{\bin}_{\tout}(\theta)$. 
In Fig. \ref{fig:rbscomplex}, we present two possible compilations of the complex $\RBS$ into $\CNOTs$, $R_{y}(\theta)$, and $R_z(\phi) \coloneqq e^{-i\phi Z}$ gates.
In App. \ref{app:compilation_one_gate}, we show how to reduce the number of controlled operations in the second compilation by merging the two controlled-Pauli rotations into one controlled-$\operatorname{SU}(2)$ rotation.

Def. \ref{lemma:gRBS} below generalizes Eq. \eq{complexRBS} for two arbitrary computational basis states of possibly different HWs.

\begin{definition}[Generalized complex RBS gate ($\gRBS$)]\label{lemma:gRBS}
Let $m,m^{\prime}\ge1$ be integers, $\binbf:=\{\mathcolor{blue}{in_1,\ldots,in_m}\}\subset[m+m^{\prime}]$, and $\toutbf:=\{\mathcolor{teal}{out_1,\ldots,out_{m^{\prime}}}\}:=[m+m^{\prime}]\setminus\binbf$. Let $\ket{1_{\binbf}\,0_{\toutbf}}$ ($\ket{0_{\binbf}\,1_{\toutbf}}$) stand for the computational basis state having $m$ ones (zeros) at bit positions $\binbf$ and $m^{\prime}$ zeros (ones) at bit positions $\toutbf$.  
We define the generalized complex RBS gate $R^{\binbf}_{\toutbf}(\theta, \, \phi)$ relative to the input and output sets $\binbf$ and $\toutbf$ as the $(m+m')$-qubit gate acting as
\begin{gather}\label{eq:gR}
\begin{aligned}
    \!R^{\binbf}_{\toutbf}&\!(\theta, \, \phi)\!\ket{1_{\binbf}\,0_{\toutbf}} \!= \\ &e^{i \phi}\cos(\theta)\!\ket{1_{\binbf}\,0_{\toutbf}} + e^{-i \phi}\sin(\theta)\!\ket{0_{\binbf}\,1_{\toutbf}}\! \\
    \!R^{\binbf}_{\toutbf}&\!(\theta, \, \phi)\!\ket{0_{\binbf}\,1_{\toutbf}} \!= \\ &e^{-i \phi}\cos(\theta) \!\ket{0_{\binbf}\,1_{\toutbf}} - e^{i \phi}\sin(\theta) \!\ket{1_{\binbf}\,0_{\toutbf}} \!
\end{aligned}        
\end{gather}
on the 2-dimensional subspace spanned by $\{\ket{1_{\binbf}\,0_{\toutbf}},\ket{0_{\binbf}\,1_{\toutbf}}\}$, and as the identity elsewhere. 
\end{definition}

\begin{figure}[t!]
\centering
\includegraphics[scale=0.61]{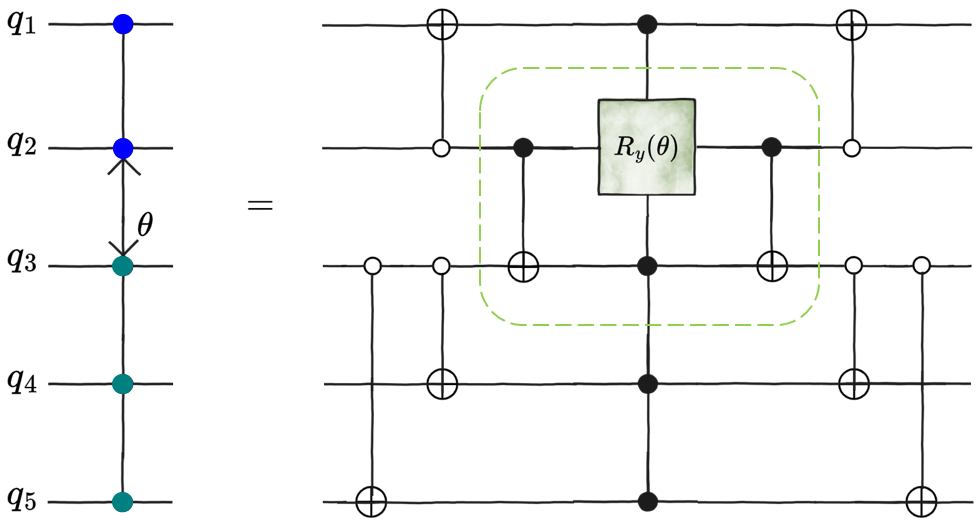}
\caption{
    \textbf{Possible compilation of the gRBS gate.} 
    An illustration of the $(m + m')$-qubit $\gRBS$ gate for $m = 2$ and $m^{\prime} = 3$, namely $R^{\mathcolor{blue}{1,2}}_{\mathcolor{teal}{3,4,5}}(\theta,\phi=0)$. 
    This gate maps $\ket{11000}$ onto a superposition of $\ket{11000}$ and $\ket{00111}$ with real amplitudes, according to Eq. \eq{gR}.
    Colors indicate input (blue) and output (green) qubits.
    We chose the compilation in Fig. \ref{fig:rbscomplex} (\textit{Bottom}) (green dashed box). 
    Controlled-$\gRBS$ gates can be built by adding extra controls to the controlled-$R_{y}$ rotation.
    This compilation can be generalized to $R^{\mathcolor{blue}{1,2}}_{\mathcolor{teal}{3,4,5}}(\theta, \, \phi)$ by replacing the $R_{y}$ gate with a generic $\operatorname{SU}(2)$ gate
    (see App. \ref{app:compilation_one_gate}).
} 
\label{fig:generalizedRBS}
\end{figure}

\noindent Whenever needed, we will write down the gate addresses explicitly, \emph{i.e.}, $R^{\binbf}_{\toutbf}(\theta, \, \phi) \equiv R^{\mathcolor{blue}{in_1,\ldots,in_m}}_{\mathcolor{teal}{out_1, \ldots, out_{m^{\prime}}}}(\theta, \, \phi)$. 
For $m'=m$, the gRBS gate in Eq. \eq{gR} is HW-preserving, reducing to Eq. \eq{complexRBS} if $m'=m=1$. 
The latter will be the basic building block of our fixed-Hamming-weight encoder for dense data in Sec. \ref{sec:hamming_weight}. 
$\gRBS$ gates with generic $m,m^\prime$ superpose computational basis states of possibly different HWs and will be used to deal with sparse data in Sec. \ref{sec:sparse} and to build a binary basis encoder in Sec. \ref{sec:binary}. 

In Fig. \ref{fig:generalizedRBS}, we present a possible compilation of the $\gRBS$ in Eq. \eq{gR} when $\phi=0$ (see App. \ref{app:compilation_one_gate} for the generalization to $\phi\ne0$). 
The compilation goes as follows: 
$(i)$ choose any two qubits, one from the $\bin$ and one from the $\tout$ sets, in which to act with a $\RBS$ gate -- in this example, $R_{\mathcolor{teal}{3}}^{\mathcolor{blue}{2}}(\theta)$ (green dashed box);
$(ii)$ add a network of $m+m^\prime-2$ anti-$\CNOT$ gates with the anti-controls on each qubit selected in $(i)$ and the targets on all the other qubits from the respective set ($\bin$ or $\tout$); 
$(iii)$ add the $\RBS$ gate chosen in $(i)$ controlled by all the remaining qubits from $\bin,\tout$ -- in this example, $c_{\mathcolor{blue}{1},\mathcolor{teal}{4,5}}\,R_{\mathcolor{teal}{3}}^{\mathcolor{blue}{2}}(\theta)$; 
$(iv)$ undo the anti-CNOT network of step $(ii)$. 

As explained in the following sections, our encoders will repeatedly add a new component $\ket{b^{\prime}}$ to a superposition state $\ket{\psi}$ by acting non-trivially only on the two-dimensional subspace spanned by $\ket{b^{\prime}}$ and one of the existing components $\ket{b}$ of $\ket{\psi}$. 
This requires, in general, controlled-$\gRBS$ gates, as detailed in the lemma below.

\begin{lemma}[Adding a new amplitude to a state $\ket{\psi}$]
Let $n \ge 2$ be an integer, $b,\, b^{\prime} \in\{0,1\}^{\otimes n}$ bitstrings of Hamming weights $\abs{b}$ and $\abs{b^{\prime}}$, respectively, and $\ket{\psi}$ be an arbitrary $n$-qubit state having a non-zero amplitude along the computational basis state $\ket{b}$ but not along $\ket{b'}$. 
Let $\mathcal{I},\mathcal{I}'$ be the sets of bits where $b$ and $b^{\prime}$ have a $1$, respectively, and define the sets of input, output, control, and anti-control qubits respectively by $\binbf:=\mathcal{I} \setminus \mathcal{I}^{\prime}$, $\toutbf \coloneqq \mathcal{I}^{\prime} \setminus \mathcal{I}$, $\vctrlbf \coloneqq \mathcal{I} \cap \mathcal{I}^{\prime}$, and $\bctrlbf \coloneqq \mathcal{I}_{c} \cap \mathcal{I}_{c}^{\prime}$. 
Their respective cardinalities are denoted $m$, $m^{\prime}$, $\ell$, and $\ell^{\prime}$, with $m+m^{\prime}+\ell+\ell^{\prime}=n$.  
Then, the $n$-qubit multi-controlled gate $\overline{c}_{\bctrlbf} \, c_{\vctrlbf} \, R^{\binbf}_{\toutbf}(\theta, \, \phi)$ acts on $\ket{\psi}$ by superposing its $\ket{b}$ component with $\ket{b'}$ as in Eq. \eq{gR} while leaving all the other components untouched. 
In particular, 
\begin{enumerate}[leftmargin=*]
    \item if $m'\ge m$ and $\ket{b}$ is the computational basis state of maximum HW in $\ket{\psi}$, the anti-controls can be removed and the ($n-\ell'$)-qubit gate $c_{\vctrlbf}\,R^{\binbf}_{\toutbf}(\theta, \, \phi)$ superposes $\ket{b}$ with $\ket{b'}$ of HW $\abs{b'}=\abs{b}+(m'-m)$;
    \item if $m'\le m$ and $\ket{b}$ is the computational basis state of minimum HW in $\ket{\psi}$, the controls can be removed and the ($n-\ell$)-qubit gate $\overline{c}_{\bctrlbf}\,R^{\binbf}_{\toutbf}(\theta, \, \phi)$ superposes $\ket{b}$ with $\ket{b'}$ of HW $\abs{b'}=\abs{b}-(m-m')$.
\end{enumerate}
\end{lemma}
Our HW-$k$ encoders in Sec. \ref{sec:hamming_weight} fall in the scope of the first subcase of the above lemma and, therefore, will be based on multi-controlled $\gRBS$ gates. 
In App. \ref{app:compilation_one_gate} we show that a $c_{\mathcolor{violet}{ctrl_1,\ldots,ctrl_\ell}}R^{\mathcolor{blue}{in_1,\ldots,in_m}}_{\mathcolor{teal}{out_1, \ldots, out_{m^{\prime}}}}(\theta, \, \phi)$ gate can be compiled using at most $22(m + m^{\prime}) + 20 (\ell - 1)$ $\CNOT$ gates, with exact compilations provided for $\ell \le 3$ and $m = m^{\prime} = 1$, and upper bounds otherwise (see Lemma \ref{lemma:cgRBScompilationcomplex} and Table \ref{tab:cgRBScompilation} for a summary). 

\section{Hamming-weight-$k$ encoders}
\label{sec:hamming_weight}

We start this section by defining generic amplitude encoders and parameter-optimal encoders.
Next, we present our Hamming-weight-$k$ encoder.

\subsection{Amplitude encoders}
\label{sec:amplitude_encoders}

Let us define amplitude encoders for an arbitrary set of computational basis states. 

\begin{definition}[Amplitude encoder]
\label{def:amplitudeencoder}
Let $\vecx$ be any $d$-dimensional data vector, $\normx \coloneqq \sqrt{\sum_{j = 1}^{d} \abs{x_{j}}^{2}}$ be its $\ell_{2}$-norm, and $B \coloneqq \{\ket{b_{j}}:b_{j} \in \{0,1\}^{\otimes n}\}_{j \in [d]}$ be a set of $d$ computational basis states of $n$ qubits with $n \geq \log_2(d)$. 
An amplitude encoder in the basis $B$ is an $n$-qubit parameterized quantum circuit $\LOAD_{B}(\vecx)$, with gate parameters depending on $\vecx$, such that
\begin{align}\label{eq:amplitude_encoding_def}
\LOAD_{B}(\vecx) \, \ket{0}^{\otimes n} \coloneqq \frac{1}{\normx} \, \sum_{j=1}^{d} \, x_j \, \ket{b_j}.
\end{align}
\end{definition}

We explicitly distinguish between real-valued data vectors, $\vecx \in \mathbb{R}^d$, and complex-valued ones, $\vecx \in \mathbb{C}^d$, since the basic quantum gates will be distinct in each case. 
Gate parameters in Def. \ref{def:amplitudeencoder} refer to rotation angles such as $\theta$ and $\phi$ introduced in Sec. \ref{sec:preliminaries}. 
The total number of parameterized gates in a given circuit may vary depending on the specific circuit synthesis. 
In particular, in the case of variational encoders, usual schemes \cite{Monbroussou2023, Cherrat2024} resort to overparametrization to reach the necessary expressivity to encode an arbitrary vector $\vecx$. 
We will present a deterministic protocol to construct circuits using the minimum number of parameters to express any $\vecx$ exactly. 
Our construction is based on the observation that, due to the $\ell_2$-normalization in Eq. \eq{amplitude_encoding_def}, we effectively encode the vector $\vecx / \normx$, which lies in the unit ($d-1$)-sphere for $\vecx \in \mathbb{R}^{d}$ or ($2d-1$)-sphere for $\vecx \in \mathbb{C}^{d}$. 
Therefore, expressing an arbitrary data vector requires exactly $d-1$ and $2d-1$ real parameters, respectively. This motivates the following definition of a parameter-optimal encoder. 

\begin{definition}[Parameter-optimal amplitude encoder]\label{def:parameter_optimal}
An amplitude encoder $\LOAD_{B}(\vecx)$ is parameter-optimal if it uses exactly $d-1$ gate parameters for $\vecx \in \mathbb{R}^d$ or $2d-1$ gate parameters for $\vecx \in \mathbb{C}^d$. 
\end{definition}

Efficient quantum data encoders using $\order{d}$ two-qubit gates are known in the unary basis $B_{1} = \big\{ \ket{b_{j}}: b_{j} \in \{0,1\}^{\otimes n}\text{ and }\abs{b_{j}} = 1 \big\}_{j \in [d]}$ \cite{RamosCalderer2021, Johri2021, Kerenidis2022, Kerenidis2022subspace, Zhang2023}. 
However, they require as many qubits as data entries ($n=d$), hence not offering any space compression. 
The binary basis $B_{\text{binary}} = \big\{ \ket{b_{j}}: b_{j} \in \{0,1\}^{\otimes n} \big\}$, with $n = \lceil \log_{2}(d) \rceil$, gives rise to exponential space compression, but the corresponding encoders have exponential gate complexity \cite{Plesch2011}.
In the following section, we study amplitude encoding on a Hamming-weight-$k$ subspace, \emph{i.e.}
\begin{align}\label{eq:HWk_basis}
B_{k} \coloneqq \big\{ \ket{b_{j}} : b_{j} \in \{0,1\}^{\otimes n} \text{ and } \abs{b_{j}} = k\big\}_{j \in [d]} \, ,
\end{align}
with $k \in [n]$ and $d = \binom{n}{k}$.  
The amplitude encoder $\LOAD_{B_k}(\vecx)$ corresponds to the intermediate regime of polynomial space compression, $n \in \mathcal{O}\big(k \, d^{1/k} \big)$. 
In Sec. \ref{sec:binary}, we construct a binary basis encoder combining HW-$k$ encoders for all $k$, since $B_{\text{binary}} \equiv \bigcup_{k = 0}^{n} B_{k}$. 

\begin{subroutine}[t!]
\caption{\label{alg:nxt-hw-bs} Finding the next HW-$k$ bitstring}
\DontPrintSemicolon
{\bf Input:} bitstring $b$ and indices {\it marked} of marked bits of $b$\;
{\bf Output:} new bitstring $b^{\prime}$ satisfying $\abs{b^{\prime}}=\abs{b}$ and $\abs{b\oplus b^{\prime}}=2$, and updated list of marked bits\;
\;
\SetKwFunction{NextBS}{\text{\sc{NextBS}}}
\SetKwProg{Fn}{Function}{:}
\Fn{\NextBS{b, marked}}
{\;
    $p\leftarrow$ $\max(marked)$ \Comment{index of the pivot bit}\;
    \If{$b(p)=0$}{
        $\ell \leftarrow$ index of nearest $1$ to the right of $p$\; 
    }
    \Else{
        $\ell \leftarrow$ index of farthest $0$ to the right of $p$, without passing another $1$\;
    }        
    $b^{\prime} \leftarrow b $ with $p$-th and $\ell$-th bits exchanged \;
    $j\leftarrow$ starting index of the last sequence of equal bits in $b^{\prime}$\;
    $marked' \leftarrow marked\setminus\{p\}\cup \{p+1, \ldots, j-1\}$\;\Comment{unmark bit $p$ and mark all bits between $p$ and $j$} \;
}
    \KwRet{$\{b^{\prime},\,marked^\prime\}$}
\end{subroutine}

\subsection{HW-$k$ amplitude encoder}

In this section, we introduce an amplitude encoder $\LOAD_{B_{k}}(\vecx)$ on a Hamming-weight-$k$ subspace $B_k$ given by Eq. \eq{HWk_basis}. 
Hereafter, without loss of generality, the vector $\vecx$ is assumed to have dimension $d = \binom{n}{k}$. 
The resulting quantum circuit uses as basic ingredients the HW-preserving gates $R^{\bin}_{\tout}(\theta, \, \phi)$ introduced in Sec. \ref{sec:preliminaries}. 
The circuit architecture is decided by a classical algorithm that generates a sequence $[b_{j}]_{j \in [d]}$ of all HW-$k$ bitstrings ordered such that the Hamming distance $|b_{j}\oplus b_{j+1}|$ between subsequent bitstrings equals two.  
To generate these sequences, we use a \emph{Gray code} called Ehrlich's algorithm \cite{Even1973}. 
The code receives an initial bitstring $b$ with $\abs{b} = k$ as input and outputs the sequence of all bitstrings of HW $k$ by flipping only a pair of bits at each step of an iterative procedure. The total number of iterations is exactly $\binom{n}{k}-1$.
For completeness, the algorithm for one such iterative step is presented in Alg. \ref{alg:nxt-hw-bs}.
The fact that $|b_{j}\oplus b_{j+1}|=2$ for every $j$ allows one to superpose the corresponding states $\ket{b_{j}}$ and $\ket{b_{j+1}}$ using (controlled) $\RBS$ gates, as explained in Sec. \ref{sec:preliminaries} (see Eq. \eq{complexRBS}). 
In Alg. \ref{alg:rbs-gate-params}, we present a routine that, given $b_{j}$ and $b_{j+1}$, computes the parameters $\{\bin,\, \tout,\, \vctrl \}$ of the gate needed to superpose $\ket{b_{j}}$ and $\ket{b_{j+1}}$ as follows: the $1$'s that remain unchanged between bitstrings $b_{j}$ and $b_{j+1}$ correspond to control qubits, while the bits that were $1$ ($0$) in $b_{j}$ and became $0$ ($1$) in $b_{j+1}$ are associated with the input (output) of the RBS gate. 
The angles $(\theta_j, \, \phi_j)$ are coordinates of the vector $\vecx / \normx$ on the hypersphere and depend on whether $\vecx\in\mathbb{R}^d$ or $\vecx\in\mathbb{C}^d$. 
They will be presented in Secs. \ref{sec:real_dense} and \ref{sec:complex_dense} below.

The general procedures to construct circuits for $\vecx\in\mathbb{R}^d$ and $\vecx\in\mathbb{C}^d$ are given, respectively, by Alg. \ref{alg:hw-enc} and Alg. \ref{alg:hw-enc-cplx}. 
These algorithms are classical and have four independent routines that can be summarized as follows:

\begin{tcolorbox}[colback=gray!5!white,colframe=gray!75!black,width=\columnwidth,title={{{\bf Summary of Algs. \ref{alg:hw-enc} and \ref{alg:hw-enc-cplx}}}},center title]
\begin{enumerate}[leftmargin=*]
    \item 
    Use the Ehrlich algorithm (Alg. \ref{alg:nxt-hw-bs}) to generate a sequence $[b_j]_{j\in[d]}$ of HW-$k$ bitstrings with Hamming distances $\abs{b_j\oplus b_{j+1}}=2$ for all $j$;
    \item Given $b_j$ and $b_{j+1}$, use Alg. \ref{alg:rbs-gate-params} to compute gate addresses $\vctrl$, $\bin$, $\tout$ needed to superpose states $\ket{b_j}$ and $\ket{b_{j+1}}$;
    \item If $\vecx\in\mathbb{R}^d$, compute gate parameter $\theta_j$ using Eq. \eq{spherical_coords}; 
    if $\vecx\in\mathbb{C}^d$, calculate $(\theta_j, \, \phi_j)$~ using Eqs. \eq{cplx_arg_angles_theta} and \eq{cplx_arg_angles_phi};
    \item Add gate $c_{\vctrl}R^{\bin}_{\tout}(\theta_{j}, \, \phi_{j})$ (or $c_{\vctrl}R^{\bin}_{\tout}(\theta_{j})$) to the circuit;
    \item Repeat steps $2$-$4$, $\, \forall \, j$.
\end{enumerate}
\end{tcolorbox}

\noindent{We} refer to the particular bitstring sequence generated in step $1$ of the Summary as the \emph{Ehrlich ordering} (see Tab. \ref{tab:example6choose2} in App. \ref{app:tables} for an example with $n=6$ and $k=2$). 
It is worth noting that if one is required to encode amplitudes into specific pre-established states $\ket{b_j}$ (\emph{e.g.}, preserving standard binary ordering), then one must sort the input data vector $\vecx$ beforehand to match the Ehrlich ordering. 
Hereafter, we assume that either $\vecx$ is sorted in Ehrlich ordering or that no specific ordering order is required by the data. 
When that is not the case, more complicated gates may be needed to superpose $\ket{b_j}$ and $\ket{b_{j+1}}$ --- this is the scope of the sparse-access model encoder discussed in Sec. \ref{sec:sparse}.
Also, for $k = 1$, the circuit recovers the unary encoding architecture \cite{Landman_2022}.

\begin{subroutine}[t!]
\DontPrintSemicolon
{\bf Input:} two bitstrings $b$ and $b^{\prime}$, and set \textit{untouched} of indices of bits with value $1$ in the initial bitstring\;
{\bf Output:} Gate parameters $\{\bin,\, \tout,\, \vctrl\}$ needed to superpose $\ket{b},\ket{b'}$, and updated list of untouched bits\;
\caption{\label{alg:rbs-gate-params} Getting gate addresses from $b,b^{\prime}$}
\;
\SetKwFunction{GateParams}{\text{\sc{GateAddresses}}}
\SetKwProg{Fn}{Function}{:}
\Fn{\GateParams{$b, b^{\prime}, untouched$}}
{\;
    $\mathcal{I} \leftarrow$ list of indices where $b$ has an $1$\;
    $\mathcal{I}' \leftarrow$ list of indices where $b^{\prime}$ has an $1$\;
    $\vctrl \leftarrow \mathcal{I} \cap \mathcal{I}' $ \;
    $\bin \leftarrow \mathcal{I} \setminus \vctrl $ \;
    $\tout \leftarrow \mathcal{I}' \setminus \vctrl $ \;
    $untouched^\prime \leftarrow untouched \setminus \{\bin,\tout \} $\;
    \Comment{remove $\bin,\tout$ since now they have been touched}\;
    $\vctrl \leftarrow \vctrl \setminus untouched^\prime$ \Comment{remove unnecessary controls}\;
}
    \KwRet{$\{\bin,\, \tout,\, \vctrl, untouched^\prime\}$}
\end{subroutine}

A priori, the procedure above adds a controlled $\RBS$ gate with exactly $k-1$ controls to the circuit at each iteration. 
This would result in the naive count of $\binom{n}{k}-1$ controlled $\RBS$ gates in total with $k-1$ controls each.
However, several controls can be removed at the initial iterations since the corresponding control qubits are in the state $\ket{1}$ and not having been entangled with any other qubit yet. 
The variable \emph{untouched} in Alg. \ref{alg:rbs-gate-params} keeps track of these disentangled qubits and is used to remove unnecessary controls from the list of gates returned by the naive algorithm. 
As a result of this optimization, the circuit resulting from Alg. \ref{alg:hw-enc} consists of $\binom{n - (k - \ell)}{\ell + 1}$ controlled $\RBS$ gates having $\ell$ controls each, with $\ell+1 \in [k]$ ($\ell = 0$ corresponds to removing all the controls and $\ell=k-1$ to no removal). 
This leads to a significant reduction in the total $\CNOT$-gate count of the circuit. 
The total number of parameterized gates, $\sum_{\ell = 0}^{k - 1}\binom{n - (k - \ell)}{\ell + 1} = \binom{n}{k} - 1$, remains the same, while the total $\CNOT$-gate count of the optimized HW-$k$ encoder $\LOAD_{B_k}(\vecx)$ generated by Alg. \ref{alg:hw-enc} is

\begin{gather}\label{eq:totalCNOTcount}
\begin{aligned}
\#\CNOTs &= \sum_{\ell = 0}^{k - 1} \binom{n - (k - \ell)}{\ell + 1} \, C_{\ell} \\
&\leq \left[\binom{n}{k} - 1\right] \, C_{k-1} \, ,
\end{aligned}
\end{gather}
where $C_{\ell}$ is the $\CNOT$-gate cost of a single $\ell$-controlled $R^{\bin}_{\tout}(\theta, \, \phi)$ gate.
The upper bound is the naive count without simplifying initial controls. 
The latter will be the exact CNOT count whenever the simplification is not possible because the initial state already contains a superposition, such as for the binary encoder in Sec. \ref{sec:binary}.
The precise values of $C_{\ell}$ depend on the compilation (see App. \ref{app:compilation_one_gate}) and also differ for $\phi = 0$ and $\phi \ne 0$. 
We present explicit expressions for each of these cases in Lemmas \ref{lemma:totalCNOTcount_real_dense} and \ref{lemma:totalCNOTcount_cpx_dense}. 
We observed heuristically that choosing $1^{k} 0^{n-k}$ as the initial bitstring in Alg. \ref{alg:nxt-hw-bs} maximizes the removal of redundant controls in the initial iterations. Furthermore, it has the extra benefit of grouping the controls of some of the subsequent controlled $\RBS$ gates. This opens the possibility of further reducing the total $\CNOT$ cost in the circuit using one clean ancillary qubit to control these groups of gates all at once (see App. \ref{app:cnc}). 
For simplicity, we only present $\CNOT$-gate counts for the ancilla-free implementation.

It is important to mention that, for HW $k > n/2$, one can avoid the costly implementation of ($k-1$)-controlled-RBS gates that would arise from the above procedure by exploring the fact that HW-$k$ bitstrings are the negation of bitstrings of HW-($n-k$). Therefore, a more efficient circuit can be built by copying the architecture of an HW-($n-k$) encoder circuit with the negated initial bitstring and controlled $\RBS$ gates replaced by anti-controlled $\RBS$s. However, this procedure only works if the initial state is not in a superposition. 

Finally, the way we construct the circuits leads to obtaining the angles of the RBS gates using the hyperspherical coordinates system. The next two subsections cover the case when $\vecx$ is real- and complex-valued respectively.

\begin{algorithm}[t!]
\DontPrintSemicolon
{\bf Input:} number of qubits $n$, Hamming weight $k$, and data vector $\vecx$ of dimension $d\le\binom{n}{k}$\;
{\bf Output:} Quantum circuit $\mathcal{C} = \LOAD_{B_k}(\vecx)$\;
\;
\caption{\label{alg:hw-enc} HW-$k$ encoder for dense $\vecx\in\mathbb{R}^d$}
\DontPrintSemicolon
\SetKwFunction{FMain}{\text{\sc{HWEncoder}}}
\SetKwFunction{FNextHWkBS}{\text{\sc{NextBS}}}
\SetKwFunction{GateParams}{\text{\sc{GateAddresses}}}
\SetKwProg{Fn}{Function}{:}
\Fn{\FMain{n, k, $\vecx$}}
{\;
    $\mathcal{C} \leftarrow {\rm Circuit}(n)$\;
    \For{$i \leftarrow n-k+1 \; \KwTo \; n$} {
        Add gate $X$ on qubit $i$ to $\mathcal{C}$
    } 
    $b \leftarrow 1^{k}0^{n-k}$ \Comment{initial bitstring}\; 
    $marked \leftarrow \{1, \ldots, k\}$ \Comment{indices of marked bits}\; 
    $untouched \leftarrow \{n-k+1,\ldots,n\}$ \Comment{indices of bits with value $1$ that have never been touched by an RBS gate}\;
    \For{$j \leftarrow 1 \; \KwTo \; d-1$} {
        $\{b^{\prime},marked^\prime\} \leftarrow \FNextHWkBS(b, marked)$ \;
        $\{\bin,\tout,\vctrl,untouched^\prime\} \leftarrow \GateParams(b,b^{\prime},untouched)$ \;
        Compute $\theta_j$ from Eq. \eq{spherical_coords} \;
        Add gate $c_{\vctrl}R^{\bin}_{\tout}(\theta_j)$ to $\mathcal{C}$\;
        $\{b,marked,untouched\} \leftarrow \{b^{\prime},marked^\prime,untouched^\prime\}$ \;
    }
    }
    \KwRet{$\mathcal{C}$}
\end{algorithm}

\subsubsection{Dense real-valued data}
\label{sec:real_dense}

For $\vecx\in \mathbb{R}^d$, the $d-1$ angles of the $c_{\vctrl}R^{\bin}_{\tout}(\theta_j)$ gates used to build the circuit in Alg. \ref{alg:hw-enc} are
\begin{gather}\label{eq:spherical_coords}
\begin{aligned}
\theta_{j} &\coloneqq \atantwo\left(\sqrt{\sum_{j^{\prime} = j + 1}^{d} x_{j^{\prime}}^{2}} \, ,\, x_{j} \right),\,\, j \in [d-2] \\
\theta_{d-1} &\coloneqq \atantwo \big( x_{d}\, , \, x_{d-1} \big),
\end{aligned}
\end{gather}

\noindent{where} the two-argument arctangent function $-\pi<\atantwo(y,x)\le\pi$ is the principal value of $\operatorname{arg}(x+i\,y)$. 
These are the hyperspherical coordinates of the normalized vector $\vecx / \normx \in \mathbb{S}^{d-1}$, \emph{i.e.}
\begin{gather}\label{eq:hypspherical}
\begin{aligned}
x_1 &= \normx \cos(\theta_{1}) \notag\\    
x_2 &= \normx \sin(\theta_{1})\cos(\theta_{2}) \notag\\
\,\,\,\vdots \\
x_{d-1} &= \normx \sin(\theta_{1})\cdots\sin(\theta_{d-2})\cos(\theta_{d-1}) \notag\\
x_{d} &= \normx \sin(\theta_{1})\cdots\sin(\theta_{d-2})\sin(\theta_{d-1}) \notag
\,.
\end{aligned}
\end{gather}
In Fig. \ref{fig:hwk2circuitexamples}, we illustrate an example of the resulting quantum circuit for $k=2$ and $n=6$ qubits (hence $d=15$). 
An explicit step-by-step sketch of the execution of Alg. \ref{alg:hw-enc} for the same example is presented in Tab. \ref{tab:example6choose2} in App. \ref{app:tables}. 
The following lemma provides the total $\CNOT$-gate count to implement $\LOAD_{B_{k}}(\vecx)$ using Alg. \ref{alg:hw-enc}. 

\begin{lemma}[Total $\CNOT$ cost of $\LOAD_{B_k}(\vecx)$ for $\vecx\in\mathbb{R}^{d}$]
\label{lemma:totalCNOTcount_real_dense}
Let $n \ge 2$ and $k \in [1, \, n/2]$ be integers, $d = \binom{n}{k}$, and $\vecx \in \mathbb{R}^{d}$. 
The HW-$k$ encoder $\LOAD_{B_{k}}(\vecx)$ generated by Alg. \ref{alg:hw-enc} can be implemented using a number $\order{k \, d}$ of $\CNOT$ gates, namely
\begin{align}\label{eq:totalCNOTcount_real_dense}
\begin{cases}
    2(n-1), & k=1\\
    (n-2)(3n-1), & k=2\\
    \frac{1}{3}(n-3)(5n^2-6n-2), & k=3\\
    \frac{1}{12}(n-4)(13n^3-58n^2+79n-42), & k=4\\
    \frac{1}{12}\sum_{\ell=1}^4 a_\ell(n-k)^\ell + b_\ell     
    & k\ge5\,
\end{cases}
\end{align}
with $a_{1} = 178$, $a_{2} = 239$, $a_{3} = 98$, $a_{4} = 13$, and $b_{\ell} \le \sum_{\ell = 5}^{k - 1} \binom{n - (k - \ell)}{\ell + 1} \big( 16 \ell - 6 \big)$.
\end{lemma}

\begin{proof}
The total $\CNOT$-gate count is given by the general expression in Eq. \eq{totalCNOTcount}, where $C_{\ell}$ is the cost of compiling a single $\ell$-controlled $R^{\bin}_{\tout}(\theta)$ gate. 
We calculate this cost in Lemma \ref{lemma:cgRBScompilationcomplex}, and the results are summarized in Tab. \ref{tab:cgRBScompilation} in App. \ref{app:compilation_one_gate}.
Plugging the results in Lemma \ref{lemma:cgRBScompilationcomplex} in Eq. \eq{totalCNOTcount} immediately leads to Eq. \eq{totalCNOTcount_real_dense}. 
\end{proof}

\begin{figure*}[!t]
\centering
\includegraphics[width=.815\textwidth]{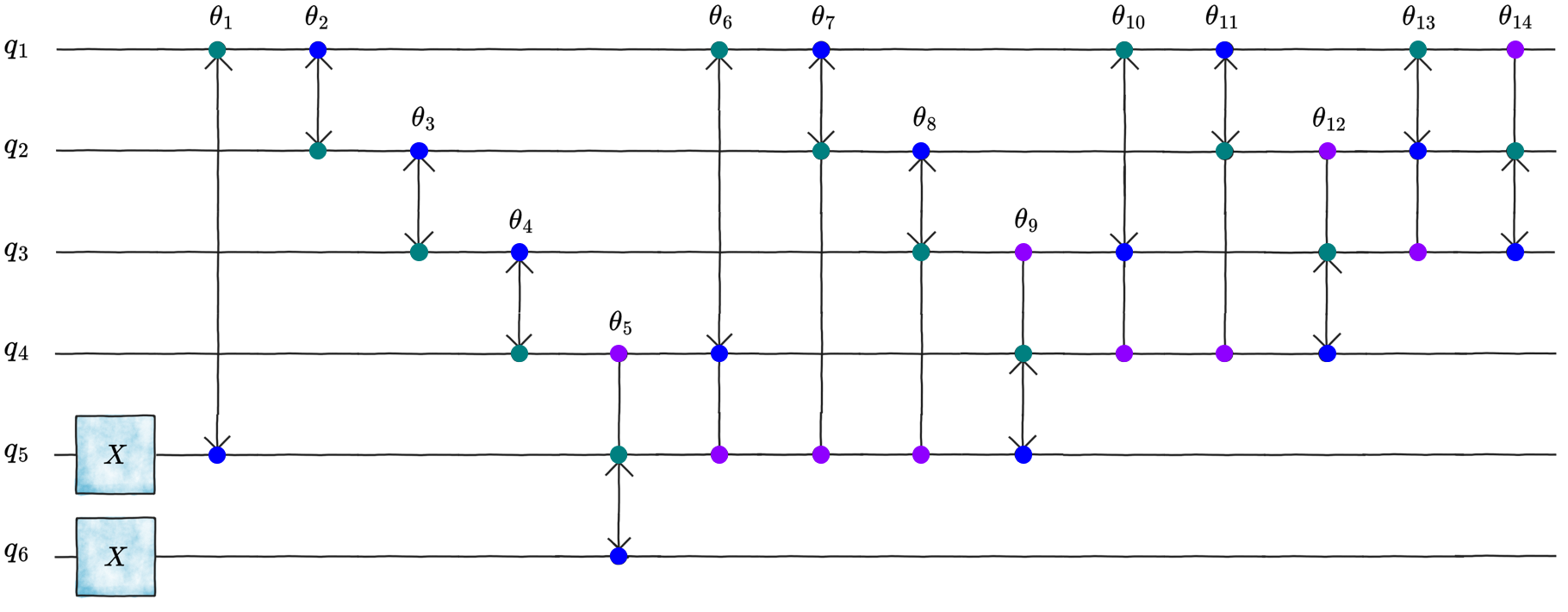}
\caption{
    \textbf{Fixed-Hamming-weight amplitude encoder.} Circuit generated by Alg. \ref{alg:hw-enc} with $n=6$ and $k = 2$ (see Tab. \ref{tab:example6choose2} for details). 
    Colors are used to indicate input (blue), output (green), and control (violet) qubits of each controlled $\RBS$ gate. 
    Angles $\{\theta_{j}\}_{j \in [14]}$ are calculated using Eq. \eq{spherical_coords}.
    }
\label{fig:hwk2circuitexamples}
\end{figure*}

\subsubsection{Dense complex-valued data}
\label{sec:complex_dense}

\begin{algorithm}[t!]
\DontPrintSemicolon
{\bf Input:} number of qubits $n$, Hamming weight $k$, and data vector $\vecx$ of dimension $d\le\binom{n}{k}$\;
{\bf Output:} Quantum circuit $\mathcal{C}=\LOAD_{B_k}(\vecx)$\;
\;
\caption{\label{alg:hw-enc-cplx} HW-$k$ encoder for dense $\vecx\in\mathbb{C}^d$}
\DontPrintSemicolon
\SetKwFunction{FMain}{\text{\sc{HWEncoder}}}
\SetKwFunction{FNextHWkBS}{\text{\sc{NextBS}}}
\SetKwFunction{GateParams}{\text{\sc{GateAddresses}}}
\SetKwProg{Fn}{Function}{:}
\Fn{\FMain{n, k, $\vecx$}}
{\;
    $\mathcal{C} \leftarrow {\rm Circuit}(n)$\;
    \For{$i \leftarrow n-k+1 \; \KwTo \; n$} {
        Add gate $X$ on qubit $i$ to $\mathcal{C}$
    } 
    $b \leftarrow 1^{k}0^{n-k}$ \Comment{initial bitstring}\; 
    $marked \leftarrow \{1,\ldots,k\}$ \Comment{indices of marked bits}\;
    $untouched \leftarrow \{n-k+1,\ldots,n\}$ \Comment{indices of ones that have never been flipped}\;
    \For{$j \leftarrow 1 \; \KwTo \; d-1$} {
        $\{b^{\prime},marked^\prime\} \leftarrow \FNextHWkBS(b, marked)$ \;
        $\{\bin,\tout,\vctrl,untouched^\prime\} \leftarrow \GateParams(b,b^{\prime},untouched)$ \;
        Calculate $\theta_j$ and $\phi_j$ via Eqs. \eq{cplx_arg_angles_theta} and \eq{cplx_arg_angles_phi} \;
        Add gate $c_{\vctrl}R^{\bin}_{\tout}(\theta_j,\phi_j)$ to $\mathcal{C}$\;
        $\{b,marked,untouched\} \leftarrow \{b^{\prime},marked^\prime,untouched^\prime\}$ \;
    }
    Calculate $\phi_d$ via Eq. \eq{cplx_arg_angles_phi} \;
    $\vctrl \leftarrow $ set of $k$ indices where the bits of $b'$ are 1  \;
    $\bin \leftarrow$ any index where $b'$ has a 0  \;
    Add gate $c_{\vctrl}\overline{\text{Ph}}(\phi_d)$ acting on qubit $\bin$ to $\mathcal{C}$\;
    }
    \KwRet{$\mathcal{C}$}
\end{algorithm}

For $\vecx \in \mathbb{C}^{d}$, we need two sets of angles, $\{\theta_{j}\}_{j \in [d-1]}$ and $\{\phi_j\}_{j \in [d]}$, to encode the absolute values and phases of each amplitude $\left\{x_{j} = \abs{x_{j}} \, \exp(i\arg(x_{j}))\right\}_{j \in [d]}$.

The $d - 1$ angles $\theta_{j}$ are responsible for the encoding of the absolute values $\abs{x_{j}}$, and are calculated as the hyperspherical coordinates of the vector $\{\abs{x_{1}}, \, \ldots, \, \abs{x_{d}} \}$, similarly to Eq. \eq{spherical_coords}. 
Explicitly, the $\theta_{j}$ angles are given by

\begin{gather}\label{eq:cplx_arg_angles_theta}
\begin{aligned}
\theta_{j} &\coloneqq \atantwo\left(\sqrt{\sum_{j^{\prime} = j + 1}^{d} \abs{x_{j^{\prime}}}^2}, \,\, \abs{x_{j}} \right),\,\, j \in [d-2] \, ; \\
\theta_{d-1} &\coloneqq \atantwo\big( \abs{x_{d}}, \, \abs{x_{d-1}} \big) \, .
\end{aligned}
\end{gather}
The first $d - 1$ angles $\phi_{j}$, on the other hand, are responsible for the encoding of the complex phases of each entry $\{x_{j}\}_{j \in [d-1]}$. 
The first angle, $\phi_{1}$, is directly related to the complex phase of the first data entry, $x_{1}$.
Inspecting Eq. \eq{complexRBS}, one can see that, while the complex $\RBS$ gate encodes a desired complex phase onto the amplitude of an \quotes{initial} computational basis state, it also encodes the complex conjugate of the same phase onto the amplitude of the \quotes{new} computational basis state created in the superposition.
This leads to the accumulation of undesired phases every time a complex $\RBS$ gate is applied.
To remove these undesired phases, we add correction terms to the subsequent angles, $\phi_{j > 1}$ as follows:

\begin{gather}\label{eq:cplx_arg_angles_phi}
\begin{aligned}
\phi_{1} &\coloneqq - \arg\big( x_{1} \big) \,, \\
\phi_{j} &\coloneqq - \arg\big(x_{j}\big) + \sum_{j^{\prime} = 1}^{j - 1} \, \phi_{j^{\prime}} \, , \quad j \in  [2, \, d] \, .
\end{aligned}
\end{gather}
To guarantee numerical stability, all the angles are taken $\mod 2\pi$. In addition to the complex $\RBS$ gate defined in Eq. \eq{complexRBS}, we also need to add an anti-phase gate $\overline{\text{Ph}}(\phi) \, \coloneqq \, e^{-i\phi} \, R_{z}(\phi/2)$ at the end of the circuit to cancel out the excess complex phase created by the sequence of complex RBS gates, where $\overline{\text{Ph}}(\phi)$ is controlled by all the qubits corresponding to a $1$ value in $b_{d-1}$ and acts on any qubit corresponding to a $0$ value.

Thus, the resulting quantum circuit has the same architecture of controlled-RBS gates as in the real case (see Fig. \ref{fig:hwk2circuitexamples}), plus an extra (controlled) $\overline{\text{Ph}}$ gate at the end of the circuit, as well as the choice of gate synthesis (see Fig. \ref{fig:rbscomplex}). 
The following lemma provides the total $\CNOT$-gate count to implement $\LOAD_{B_k}(\vecx)$ using Alg. \ref{alg:hw-enc-cplx}. 

\begin{lemma}[Total $\CNOT$-gate cost of $\LOAD_{B_k}(\vecx)$ for $\vecx\in\mathbb{C}^d$]
\label{lemma:totalCNOTcount_cpx_dense}
Let $n\ge2$ and $k\in[1,n/2]$ be integers, $d=\binom{n}{k}$, and $\vecx\in\mathbb{C}^d$. The HW-$k$ encoder $\LOAD_{B_k}(\vecx)$ generated by Alg. \ref{alg:hw-enc} can be implemented using a number $\mathcal{O}(k\,d)$ of $\CNOT$ gates, namely
\begin{align}\label{eq:totalCNOTcount_cpx_dense}
\begin{cases}
    2(n-1), & k=1\\
    (n-2)(3n-1), & k=2\\
    \frac{1}{3}(n-3)(7n^2-12n+2), & k=3\\
    \frac{1}{12}(n-4)(19n^3-86n^2+105n-30), & k=4\\
    \frac{1}{12}\sum_{\ell=1}^4 \tilde{a}_\ell(n-k)^\ell + \tilde{b}_\ell     
    & k\ge5\,
\end{cases}
\end{align}
with $\tilde{a}_1=230,\tilde{a}_2=329,\tilde{a}_3=142,\tilde{a}_4=19$, and $\tilde{b}_\ell\le\sum_{\ell=5}^{k-1} \binom{n-(k-\ell)}{\ell+1} \big(20\ell+4\big)$.
\end{lemma}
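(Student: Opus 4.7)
The plan is to follow the template of Lemma \ref{lemma:totalCNOTcount_real_dense} almost verbatim, since the circuit architecture produced by Alg. \ref{alg:hw-enc-cplx} is structurally identical to that of Alg. \ref{alg:hw-enc}: the same Ehrlich ordering is used, so exactly $\binom{n-(k-\ell)}{\ell+1}$ gates appear at each control level $\ell\in\{0,\ldots,k-1\}$. The only differences with respect to the real case are (i) each gRBS gate now carries a nonzero phase $\phi_i$ and is therefore more expensive to compile, and (ii) a single extra $k$-controlled anti-phase gate $\overline{\text{Ph}}(\phi_d)$ is appended at the end.

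First, I would apply the general counting formula \eq{totalCNOTcount}, where $C_\ell$ now denotes the CNOT cost of an $\ell$-controlled complex gRBS gate $c_{\mathcolor{violet}{ctrl}}R^{\mathcolor{blue}{in}}_{\mathcolor{teal}{out}}(\theta,\phi)$ with $\phi\neq 0$. These values are provided by Lemma \ref{lemma:cgRBScompilationcomplex} and tabulated in Tab. \ref{tab:cgRBScompilation} in App. \ref{sec:cnot_compilation}; compared with the $\phi=0$ case used in Lemma \ref{lemma:totalCNOTcount_real_dense}, each $C_\ell$ is strictly larger because the decomposition of the complex RBS in Fig. \ref{fig:rbscomplex} requires additional $R_z$ rotations which, once controlled, translate into additional CNOTs. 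This alone explains why the coefficients in \eq{totalCNOTcount_cpx_dense} are uniformly larger than those in \eq{totalCNOTcount_real_dense}.

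Next, I would account for the final anti-phase gate, which is a $k$-controlled single-qubit rotation and is not captured by the sum \eq{totalCNOTcount} (that sum runs only up to $\ell=k-1$ controls). Its CNOT cost is $\mathcal{O}(k)$ and therefore does not alter the big-$\mathcal{O}(k\,d)$ claim. For the explicit low-$k$ polynomials one must verify that this additional $\mathcal{O}(k)$ contribution has indeed been absorbed into the displayed expressions, which is a short case check for $k=1,2,3,4$ using the same compilation identities as in App. \ref{sec:cnot_compilation}.

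Finally, I would substitute the $C_\ell$ values from Tab. \ref{tab:cgRBScompilation} into \eq{totalCNOTcount}, collect terms in powers of $(n-k)$, and read off the closed forms for $k\le 4$ together with the leading polynomial-plus-remainder form for $k\ge 5$ with coefficients $\tilde{a}_1,\ldots,\tilde{a}_4$ and bound on $\tilde{b}_\ell$. The main obstacle I expect is purely combinatorial bookkeeping: tracking the precise integer coefficients through the finite sums, in a manner completely analogous to the real case but with the larger per-gate costs, and double-checking that the subleading anti-phase contribution is consistent with the stated polynomials. No conceptually new step is required beyond what was used for Lemma \ref{lemma:totalCNOTcount_real_dense}.
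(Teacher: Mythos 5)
Your proposal follows essentially the same route as the paper: plug the $\ell$-controlled complex-gRBS costs $C_\ell$ from Lemma \ref{lemma:cgRBScompilationcomplex} (Tab. \ref{tab:cgRBScompilation}) into the general count \eq{totalCNOTcount} and collect terms, so the approach matches. Two small caveats: the $C_\ell$ are not \emph{strictly} larger than in the real case (they coincide for $\ell=0,1$, which is why the $k=1,2$ polynomials are identical to Lemma \ref{lemma:totalCNOTcount_real_dense}), and the verification you propose for the final $k$-controlled anti-phase gate would in fact show it is \emph{not} absorbed into the displayed expressions (e.g.\ for $k=1$ the $n-1$ uncontrolled gates at $2$ CNOTs each already exhaust $2(n-1)$), an $\mathcal{O}(k)$ omission the paper's own proof silently shares and which does not affect the $\mathcal{O}(k\,d)$ claim.
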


\begin{proof}
The total $\CNOT$-gate count is given by the general expression \eq{totalCNOTcount}, where $C_{\ell}$ here is the cost of compiling a single $\ell$-controlled $R^{\bin}_{\tout}(\theta, \, \phi\ne0)$ gate. This is computed in Lemma \ref{lemma:cgRBScompilationcomplex} in App. \ref{app:compilation_one_gate}, and plugging in the expression immediately leads to Eq. \eq{totalCNOTcount_cpx_dense}. 
\end{proof}

\section{Sparse encoder}
\label{sec:sparse}

Here, we extend the construction of Sec. \ref{sec:hamming_weight} to the case of sparse data. 
For a $d$-dimensional data vector $\vecx$ having $s \ll d$ non-zero entries, the HW-$k$ encoder of Sec. \ref{sec:hamming_weight} still uses $d - 1$ parameterized gates. 
However, in this case, a parameter-optimal amplitude encoder, as expressed in Def. \ref{def:parameter_optimal}, would require only $s - 1$ gate parameters.

\begin{algorithm}[b!]
\DontPrintSemicolon
{\bf Input:} number of qubits $n$ and tuple $y\coloneqq \{(x_i,b_i)\}_{i\in [s]}$ of non-zero data values $x_i$ and their addresses $b_i$\;
{\bf Output:} Quantum circuit $\mathcal{C}=s\text{-Load}(y)$\;
\;
\caption{\label{alg:sparse-hw-enc} Amplitude encoder for sparse data}
\DontPrintSemicolon
\SetKwFunction{FMain}{\text{\sc{SparseHWEncoder}}}
\SetKwProg{Fn}{Function}{:}
\Fn{\FMain{n, $y$}} 
{\;
    $bs \leftarrow b_1$ \Comment{address $b$ of the $1^{\text{st}}$ element of the $y$ tuple} \;
    $\mathcal{I} \leftarrow $ set of indices where the bits of $bs$ have value 1  \;
    $\mathcal{C} \leftarrow {\rm Circuit}(n)$\;
    \For{$i \leftarrow 1 \; \KwTo \; |\mathcal{I}|$} {
        Add gate $X$ on qubit $\mathcal{I}_i$ to $\mathcal{C}$\;
    } 
    $untouched \leftarrow [n]\setminus \mathcal{I}$ \;
    \For{$j \leftarrow 2 \; \KwTo \; s$} {
        $bs^\prime \leftarrow b_j$ \;
        $\{\mathcolor{blue}{in},\mathcolor{teal}{out},\mathcolor{violet}{ctrl},untouched^\prime\} \leftarrow \GateParams(bs,bs^\prime,untouched)$ \;
        Calculate $\theta_j$ and $\phi_j$ via Eqs. \eq{cplx_arg_angles_theta} and \eq{cplx_arg_angles_phi} \; \Comment{using  $x$ values from $y$} \;
        Add gate $c_{\vctrl}R^{\bin}_{\tout}(\theta_j,\phi_j)$ to $\mathcal{C}$\;
        $\{b,untouched\} \leftarrow \{b^{\prime},untouched^\prime\}$ \;
    }
    Calculate $\phi_d$ via Eq. \eq{cplx_arg_angles_phi} \;
    $\vctrl \leftarrow $ set of $k$ indices where the bits of $b'$ are 1  \;
    $\bin \leftarrow$ any index where $b'$ has a 0  \;
    Add gate $c_{\vctrl}\overline{\text{Ph}}(\phi_d)$ acting on qubit $\bin$ to $\mathcal{C}$\;
    }
    \KwRet{$\mathcal{C}$}
\end{algorithm}

In this setting, we consider a sparse-access model, where the data vector $\vecy$ to be encoded is of the form
\begin{align}\label{eq:sparse_vector}
\vecy \coloneqq \{(x_{1}, \, b_{1}), \, (x_{2}, \, b_{2}), \, \ldots, \, (x_{s}, \, b_{s}) \} \, , 
\end{align}
with $\{x_{j}\}_{j \in [s]}$ being the non-zero components of $\vecx$, and $\{b_{j}\}_{j \in [s]}$ being the addresses (in bitstring format) associated with these values. 
The goal is to prepare the state 
\begin{align}\label{eq:sparsestate}
s\text{-}\mathrm{Load}(\vecy) \, \ket{0}^{\otimes n} \coloneqq \frac{1}{\normx} \, \sum_{j \in [s]} \, x_{j} \, \ket{b_{j}} \, ,
\end{align}
similarly to the fixed HW encoders discussed in Sec. \ref{sec:hamming_weight}. 
However, in the sparse-access model, $b_{j}$ and $b_{j + 1}$ do not need to satisfy any constraints, \emph{e.g.} equal HWs or fixed Hamming distance. 
Consequently, the complex $\RBS$ gate used in the previous sections is not enough to cover all possible sparsity configurations.
To this end, we need the generalized $\RBS$ gate, $R_{\toutbf}^{\binbf}(\theta, \, \phi)$ in Eq. \eq{gR}.
With this $(m + m^{\prime})$-qubit gate, it is possible
to create superpositions of computational basis states of different Hamming weights.

\begin{figure}[t!]\label{fig:hwk3circuitexamples}
\includegraphics[width=.8\columnwidth]{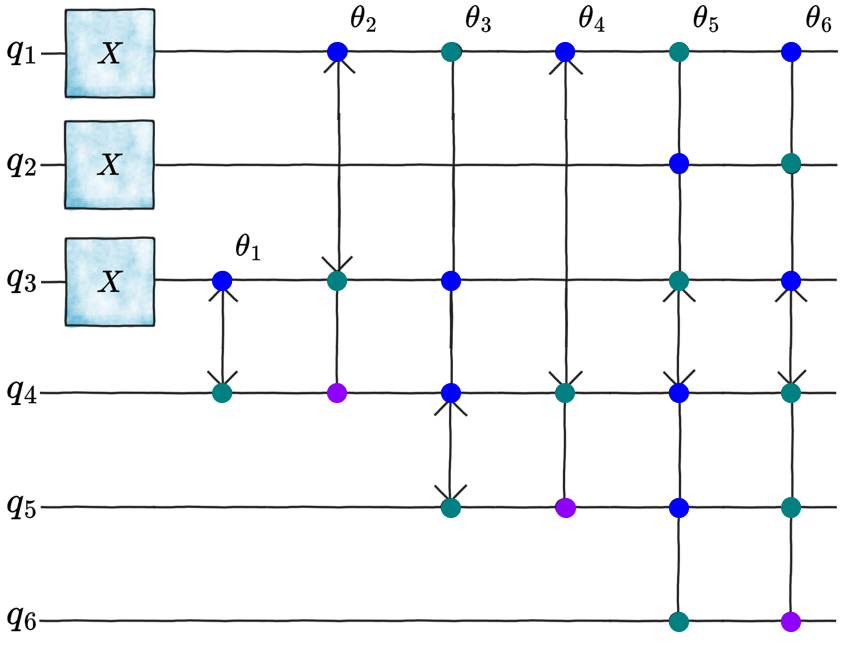}
\caption{\textbf{Sparse quantum data loader.} An example with $n=6$ qubits, and sparsity $s = 7$ with $x_j\ne0$ at positions $b_{j} \in \{7,11,14,19,26,37,58\}$ for real data. 
Colors are used to indicate the input (blue), output (green), and control (violet) qubits of each $\gRBS$ gate.
Angles are calculated using Eq. \eq{spherical_coords} for the non-zero entries. 
The execution of Alg. \ref{alg:sparse-hw-enc} to construct this circuit is shown in Tab. \ref{tab:example6choose3} in App. \ref{app:tables}. 
This circuit can be compiled using at most $174$ $\CNOTs$ (see Tab. \ref{tab:cgRBScompilation} in App. \ref{app:compilation_one_gate}).} 
\end{figure}

The procedure to build the circuit is given in Alg. \ref{alg:sparse-hw-enc} for complex data and goes as follows. Based on the first bitstring address $b_1$, we apply Pauli-$X$ gates to generate the initial state $\ket{b_1}$. 
Then, we use Alg. \ref{alg:rbs-gate-params} to extract the generalized $\RBS$ gate parameters (input, output, and control qubits) needed to generate the superposition between $\ket{b_{1}}$ and $\ket{b_{2}}$, and add the gate $R^{\mathcolor{blue}{in_1,\ldots,in_m}}_{\mathcolor{teal}{out_1, \ldots, out_{m'}}}(\theta_{1}, \, \phi_{1})$ to the circuit; 
the angles are computed from Eqs. \eq{cplx_arg_angles_theta} and \eq{cplx_arg_angles_phi}. 
The same procedure is repeated for all elements in the tuple in Eq. \eq{sparse_vector}, adding $s - 1$ (possibly controlled and generalized) $\RBS$ gates to the resulting circuit.
Alg. \ref{alg:rbs-gate-params} can output lists of inputs and outputs with different lengths (\emph{i.e.} $m\ne m'$), depending on whether an increase/decrease of Hamming-weight is needed to superpose $\ket{b_{j}}$ and $\ket{b_{j + 1}}$. 
The ancilla-free circuit architecture output by Alg. \ref{alg:sparse-hw-enc} will depend on the particular sparsity structure of the data vector $\vecy$. In the case of real-data encoding, we compute the angles $\theta_j$ from Eq. \eq{spherical_coords} instead of calculating $\theta_j$ and $\phi_j$ via Eqs. \eq{cplx_arg_angles_theta} and \eq{cplx_arg_angles_phi} in addition to not executing lines $18-21$ of Alg. \ref{alg:sparse-hw-enc}. 
The worst-case bound on the total number of $\CNOTs$ can be calculated using the compilation of $\gRBS$ gates presented in Tab. \ref{tab:cgRBScompilation} in App. \ref{app:compilation_one_gate}. 
An explicit example for $n = 6$ qubits and sparsity $s = 7$ is illustrated in Fig. \ref{fig:hwk3circuitexamples}, and in Tab. \ref{tab:example6choose3} in App. \ref{app:tables}. 

In Fig. \ref{fig:plot_appendix}(a) in App. \ref{app:cnot_cost_sparse_binary}, we show a numerical comparison between the encoders from Alg. \ref{alg:sparse-hw-enc}, Ref. \cite{Veras2022} (as implemented in the \texttt{qclib} package \cite{Araujo2023}), and Ref. \cite{Shende2006} (as implemented in the \texttt{Qiskit} package \cite{qiskit2024}).
We compare, as a function of the number of qubits $n$, the average $\CNOT$-gate count to encode $100$ random $2^n$-dimensional data vectors $\bf{y}$ of sparsity $s \in \{n, \, n^{2}\}$. 
For each instance, we used a Haar-random vector of non-vanishing amplitudes $\bf{x}$ and uniformly sampled computational basis state addresses $\ket{b_j}$. 
The results indicate that the average $\CNOT$-gate count of Alg. \ref{alg:sparse-hw-enc} and of Ref. \cite{Veras2022} scales as $\order{\operatorname{poly}(n)}$ for both sparsity levels $s$.
Meanwhile, the algorithm from Ref. \cite{Shende2006} scales as $\order{2^{n}}$ regardless of the sparsity $s$.
Moreover, Ref. \cite{Veras2022}'s encoder displays an improvement of $\order{n}$ when compared to the results for Alg. \ref{alg:sparse-hw-enc}.
However, while our sparse encoder is ancilla-free, the sparse encoder from Refs. \cite{Veras2022, Araujo2023} uses $\order{n}$ ancillas.

\section{Binary encoder}
\label{sec:binary}

\begin{figure}[t!]
\centering
\includegraphics[width=\columnwidth]{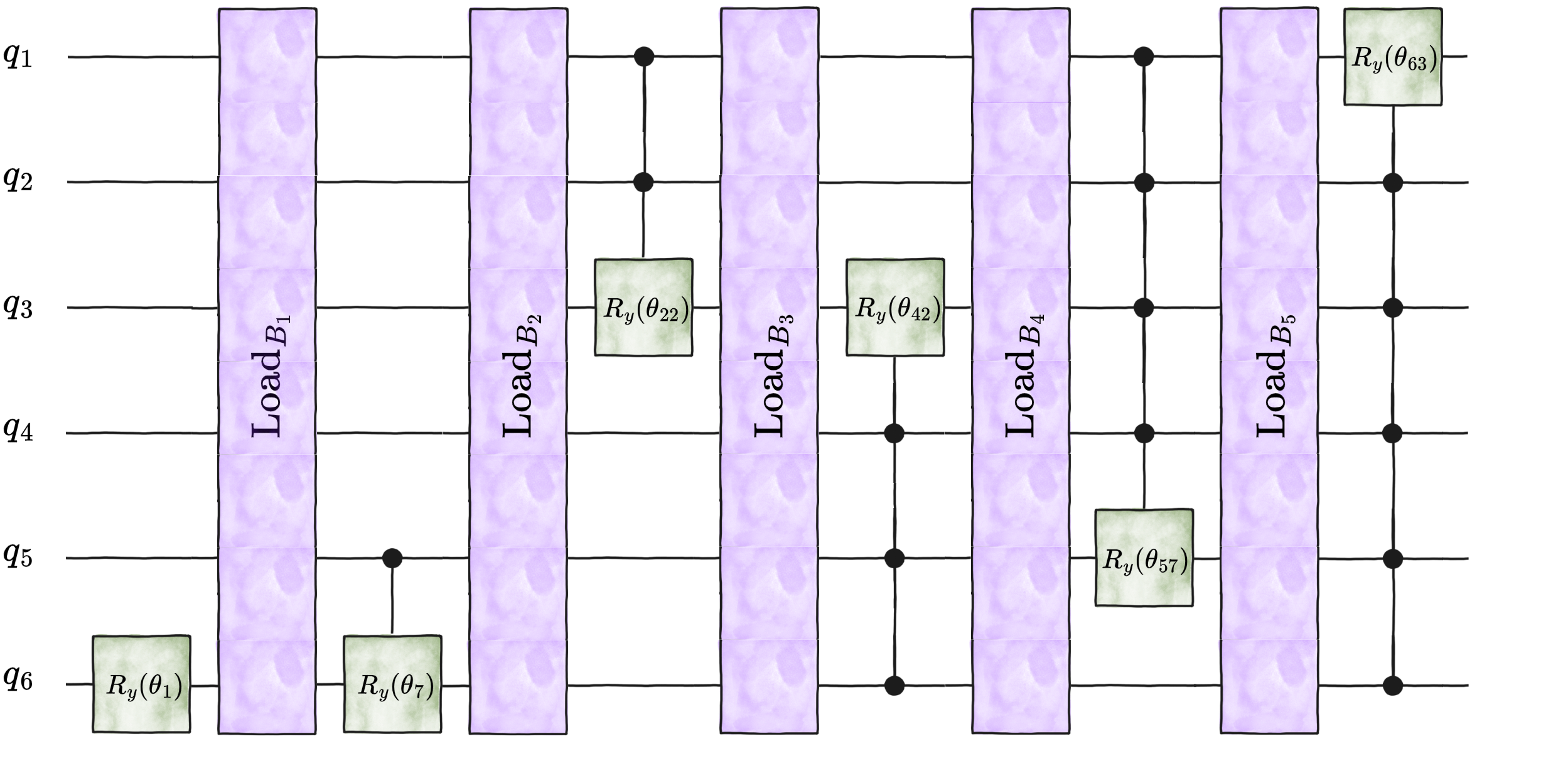}
\caption{
    \textbf{Binary-basis amplitude encoder based on HW-$k$ encoders.}
    Encoding of a real, $64$-dimensional vector $\vecx$ on a $6$-qubit quantum state.
    The initial state is the zero state, $\ket{000000}$, and the circuit is constructed by stacking HW-$k$ encoders, $\LOAD_{B_{k}}$, with $k \in [0, \, 6]$, and (controlled) $R_{y}$ rotations.
    An initial $R_{y}(\theta_{1})$ gate creates a superposition between $\ket{000000}$ and $\ket{100000}$.
    Subsequently, the bitstring $1\,0^{5}$ serves as the input of the Ehrlich Algorithm \ref{alg:nxt-hw-bs} that generates the circuit $\LOAD_{B_{1}}$, responsible for the encoding of amplitudes associated to all $6$ computational basis states of HW-$1$.
    Next, based on the last computational basis state added to the superposition by the previous $\LOAD_{B_{1}}$, the controlled $R_{y}(\theta_{7})$ rotation adds the first computational basis state of HW-$2$.
    The bitstring associated with this state is then used as input for the next Ehrlich Algorithm.
    The procedure is repeated until the last amplitude is encoded in the $\ket{111111}$ state.
    The gate parameters $\{\theta_{j}\}_{j \in [63]}$ are calculated using  Eq. \eq{spherical_coords}. 
    See Tab.  \ref{tab:hwktobinary} in App. \ref{app:tables} for more details. 
    This circuit can be compiled using at most $1048$ $\CNOTs$ (see Tab. \ref{tab:cgRBScompilation} in App. \ref{app:compilation_one_gate}).
} 
\label{fig:hwtobincircuitexample}
\end{figure}

In this section we show how to stack together the dense HW-$k$ encoders introduced in Sec. \ref{sec:hamming_weight} to encode data on multiple constant-HW subspaces (e.g., $\LOAD_{B_{k_{1}}\cup B_{k_{2}}}$ with $k_{1} < k_{2}$). 
The strategy is to populate the different fixed HW subspaces in ascending order while using a generalized $\RBS$ gate to connect the two subsequent subspaces. 
For instance, $\LOAD_{B_{k_{1}}\cup B_{k_{2}}}$ can be built as $\LOAD_{B_{k_{1}}}$ followed by a $\gRBS$ gate that increases $k_{1}$ to $k_{2}$ and $\LOAD_{B_{k_{2}}}$; 
in the special case of successive HWs, \emph{i.e.} $k_{2} = k_{1} + 1$, the $\gRBS$ can be replaced by a $k_{1}$-controlled $R_{y}$ gate.

In particular, this strategy can be used to build a full binary-basis amplitude encoder by stacking dense HW-$k$ encoders of all possible $0 \leq k \leq n $ in ascending order. 
We will first explain the general ideal for real-valued data and will later generalize it to complex data.
The procedure goes as follows: 
($i$) first, initialize the circuit in the state $\ket{0^{n}}$; 
($ii$) apply a $R_{y}$ gate in the last qubit to generate the superposition $\cos\theta_1\ket{0^{n}}+\sin\theta_1\ket{10^{n-1}}$ according to Eq. \eq{RY}; 
($iii$) use $10^{n-1}$ as the initial bitstring of the Ehrlich algorithm \ref{alg:nxt-hw-bs}, creating the circuit $\LOAD_{B_1}$, which is responsible for adding all HW-$1$ computational basis states to the superposition; 
($iv$) apply a controlled $R_{y}$ rotation to add the first computational basis state of HW-$2$ to the superposition;
($v$) use the state encoded in the previous step as input of the algorithm that will generate $\LOAD_{B_{2}}$;
($vi$) iterate over steps ($iv$) and ($v$) until all HWs are populated, and the last multi-controlled $R_{y}$ generates the superposition with $\ket{1^{n}}$. 
The (multi-)controlled $R_{y}$ gates applied between $\LOAD_{B_{k}}$ and $\LOAD_{B_{k+1}}$ should be chosen such that the last computational basis state added to the superposition is associated to a bitstring that is a viable input for the Ehrlich algorithm \cite[Theorem 2.4]{Even1973}. 
Alternatively, we can use the Hamming-weight-increasing step ($iv$) using a HW-mixing $\gRBS$ gate $R_{\toutbf}^{\binbf}(\theta, \, \phi)$ introduced in Eq. \eq{gR}.
However, since the operations necessary for the binary encoder are local, we chose to use the cheaper multi-controlled $R_{y}$ gates (see Tab. \ref{tab:cgRBScompilation} in App. \ref{app:compilation_one_gate}).

In case of complex amplitudes, the (controlled) $R_{y}(\theta_{j})$ rotations must be replaced by (controlled) $\text{Ph}(\phi_{j}) \, R_{y}(\theta_{j})$ to generate the superposition with the correct phases.
The angles $\{\theta_{j}\}_{j \in [d-1]}$ and $\{\phi_{j}\}_{j \in [d-2]}$ are calculated using Eqs. \eq{cplx_arg_angles_theta} and \eq{cplx_arg_angles_phi}.
The last multi-controlled $R_{y}$ gate, however, must be replace by a multi-controlled $\operatorname{SU}(2)$ rotation that can be decomposed as $R_{z}(\phi_{d-1}) \, R_{y}(\theta_{d-1}) \, R_{z}(\lambda)$.
The angles $\phi_{d-1}$ and $\lambda$ are calculated as
\begin{gather}
\begin{aligned}
    \phi_{d-1} &= \frac{1}{2}\left( \arg(x_{d}) - \arg(x_{d-1}) \right) \,; \\
    \lambda & = -\frac{1}{2} \left(\arg(x_{d}) + \arg(x_{d-1})\right) + \sum_{j=1}^{d-3} \, \phi_{j} \,.
\end{aligned}
\label{eq:last_angles_complex}
\end{gather}

An example of the binary encoder for $n = 6$ qubits is shown in Fig. \ref{fig:hwtobincircuitexample}. 
The total $\CNOT$-gate count is given in the following lemma.

\begin{lemma}[$\CNOT$ count of the binary encoder]
\label{lemma:binary}
Our binary encoder for $\vecx\in\mathbb{R}^{d}$ uses
\begin{align}\label{eq:cnotcountbinary}
\# \text{ $\CNOTs$} 
&\le \frac{1}{12}(13 n^{4} - 58 n^{3} + 119 n^{2} - 50 n + 120)\notag\\
&\quad+\sum_{k = 5}^{n - 1}\left[\binom{n}{k}(16k - 22) - 2\right].
\end{align}
\end{lemma}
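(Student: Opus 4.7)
The plan is to decompose the binary encoder into its constituent blocks, invoke the ingredients already developed in Secs.~\ref{sec:preliminaries} and \ref{sec:hamming_weight}, and sum the CNOT costs. First, I would rewrite the binary encoder of Sec.~\ref{sec:binary} as the composition
\begin{align*}
\bigl(c^{n-1}R_y\bigr)\,\LOAD_{B_{n-1}}(\vec{x})\,\bigl(c^{n-2}R_y\bigr)\,\cdots\,\bigl(cR_y\bigr)\,\LOAD_{B_1}(\vec{x})\,R_y\,\ket{0^n},
\end{align*}
so that the total CNOT count splits into (a) the contributions of each $\LOAD_{B_k}(\vec{x})$ for $k=1,\ldots,n-1$, and (b) those of the HW-increasing multi-controlled-$R_y$ gates $c^{k}R_y$ for $k=0,\ldots,n-1$ (with the $k=0$ gate contributing zero CNOTs).

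The key observation is that, in this setting, each $\LOAD_{B_k}(\vec{x})$ saturates the naive upper bound in \eq{totalCNOTcount}, namely $[\binom{n}{k}-1]\,C_{k-1}$, rather than the smaller exact value from Lemma~\ref{lemma:totalCNOTcount_real_dense}. This is because the input state to each $\LOAD_{B_k}(\vec{x})$ is already a superposition of HW-$(k-1)$ states, so the simplification of initial controls implemented in Alg.~\ref{alg:rbs-gate-params} (which relies on the input being a single computational-basis state) cannot be triggered---precisely the situation flagged in the remark following \eq{totalCNOTcount}.

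Next, I would substitute the explicit single-gate CNOT costs $C_\ell$ and $D_k$ (cost of a $k$-controlled-$R_y$) from Tab.~\ref{tab:cgRBScompilation} in App.~\ref{sec:cnot_compilation}, and split the sum at $k\le 4$ vs $k\ge 5$. For $k\in\{1,2,3,4\}$, expanding the binomials and collecting like terms in the corresponding $[\binom{n}{k}-1]C_{k-1}$ and $D_k$ contributions yields the quartic $\tfrac{1}{12}(13n^4-58n^3+119n^2-50n+120)$. For $k\ge 5$, the asymptotic bound $C_{k-1}\le 16k-22$ combined with the matching bound on $D_k$ makes each block consisting of $\LOAD_{B_k}(\vec{x})$ together with the subsequent $c^{k}R_y$ contribute at most $\binom{n}{k}(16k-22)-2$ CNOTs, yielding the sum $\sum_{k=5}^{n}\bigl[\binom{n}{k}(16k-22)-2\bigr]$. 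Adding the two contributions produces \eq{cnotcountbinary}.

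The hard part will be the algebraic bookkeeping for the low-$k$ contributions: one has to pair the four $[\binom{n}{k}-1]\,C_{k-1}$ terms with the appropriate multi-controlled-$R_y$ costs $D_k$ so that the coefficients assemble into exactly the stated quartic. Everything else---the decomposition of the encoder, the reason the naive count is the relevant one here, and the asymptotic single-gate CNOT bounds---is already in place in the paper, so the remainder is routine polynomial arithmetic.
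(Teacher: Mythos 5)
Your proposal follows the paper's own proof essentially verbatim: the same block decomposition into $\LOAD_{B_k}$ encoders interleaved with multi-controlled-$R_y$ gates, the same key observation that the superposed input state forces the naive bound $[\binom{n}{k}-1]C_{k-1}$ from Eq.~\eq{totalCNOTcount}, and the same substitution of $C_{k-1}$ and the multi-controlled-$R_y$ costs from Tab.~\ref{tab:cgRBScompilation}, split into exact counts for $k\le4$ and upper bounds for $k\ge5$. The argument is correct, including the check that $[\binom{n}{k}-1](16k-22)+(16k-24)=\binom{n}{k}(16k-22)-2$ for the tail terms.
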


\begin{proof}
First, it is important to notice that the two \quotes{tricks} mentioned in Sec. \ref{sec:hamming_weight} to reduce the number of controls (\emph{i.e.} eliminating initial controls and constructing a HW-($n-k$) encoder from the corresponding HW-$k$ encoder) cannot be used here, since the initial state for each $\LOAD_{B_k}$ in Fig. \ref{fig:hwtobincircuitexample} contains a superposition of states. Therefore, the cost of each $\LOAD_{B_k}$ is given by the upper bound $\left[\binom{n}{k}-1\right]C_{k-1}$ in the general expression in Eq. \eq{totalCNOTcount}, where $C_{k-1}$ is the cost of compiling a single ($k-1$)-controlled-$R^{\mathcolor{blue}{in}}_{\mathcolor{teal}{out}}(\theta)$ gate. 
After each $\LOAD_{B_k}$ is applied, one needs a $k$-controlled $R_{y}$ gate to increase the HW to $k+1$. 
The total $\CNOT$ count of the circuit is therefore $\sum_{k=0}^{n-1}\left\{\left[\binom{n}{k}-1\right]C_{k-1}+\chi_k\right\}$, where $\chi_k$ is the $\CNOT$ cost of a $k$-controlled $R_{y}$. 
Substituting the values of $C_{k-1}$ and $\chi_k$ calculated in App. \ref{app:compilation_one_gate} immediately leads to Eq. \eq{cnotcountbinary} See Tab. \ref{tab:cgRBScompilation} for a summary, and Lemmas \ref{lemma:cRYcompilation} and \ref{lemma:cgRBScompilationcomplex} for the derivation. 
The first line contains exact $\CNOT$ counts coming from $k=1$ to $4$, while the second line contains upper bounds coming from $k\ge5$. 
\end{proof}

Numerically, we observe the $\CNOT$-gate count of Eq. \eq{cnotcountbinary} to be $\le 8 \, n \, 2^{n}$ (see Fig. \ref{fig:plot_appendix}(b) in App. \ref{app:cnot_cost_sparse_binary}). 
Asymptotically, this is the same scaling obtained by the sparse encoder from Ref.~\cite{Veras2022}. 
While the $\CNOT$-gate count of Ref. \cite{Plesch2011} is $\mathcal{O}(2^{n})$, they perform their amplitude encoding in the Schmidt basis with at most $2^{n/2}$ coefficients.
In contrast, we explore all $2^{n}$ computational basis states.

\section{Experimental and numerical results}
\label{sec:results}

In this Section, we show a proof-of-principle deployment of the dense HW-$k$ encoder on the \ionq's \texttt{Aria-1} quantum processor.
We also numerically demonstrate that the encoder circuits can be used as an ansatze for a variational quantum algorithm.
Then, we analyze the fidelity of the encoder under circuit noise.

\subsection{Quantum hardware demonstration}
\label{sec:hardware}

To demonstrate our encoding protocol, we encode a $q$-Gaussian probability distribution on the \texttt{Aria-1} quantum processor from \ionq \cite{IONQ}. 
The $q$-Gaussian probability density function $p_{q,\beta}(x)$, where $q \in (-\infty,\, 3)$ and $\beta \in (0,\, \infty)$, is proportional to the \emph{$q$-exponential} function $e_{q}\left(-\beta \, x^{2}\right)$, defined as  $e^{x} \,$ if $q = 1$; $\,[1 + (1 - q)x]^{1/(1 - q)}\,$ if $q \neq 1$ and $1 + (1-q)x > 0$; and $0^{1/(1-q)}\,$ otherwise \cite{tsallis1988, Tsallis1995, Prado1999}.  
This family of distributions has a wide range of applications, \textit{e.g.} nonextensive statistical mechanics \cite{Tsallis2009}, finance \cite{Borland2002, Borland2004}, metrology \cite{Witkovsky2023}, and biology \cite{Navarro2011}.
It is worth noting that in general the $q$-Gaussian is a non-log-concave distribution, falling outside of the scope of other encoding strategies \cite{Grover2002}. Here, we chose the parameters $q = 3/2$ and $\beta = 2$. 

\begin{figure}[t!]
    \centering
    \includegraphics[width=\columnwidth]{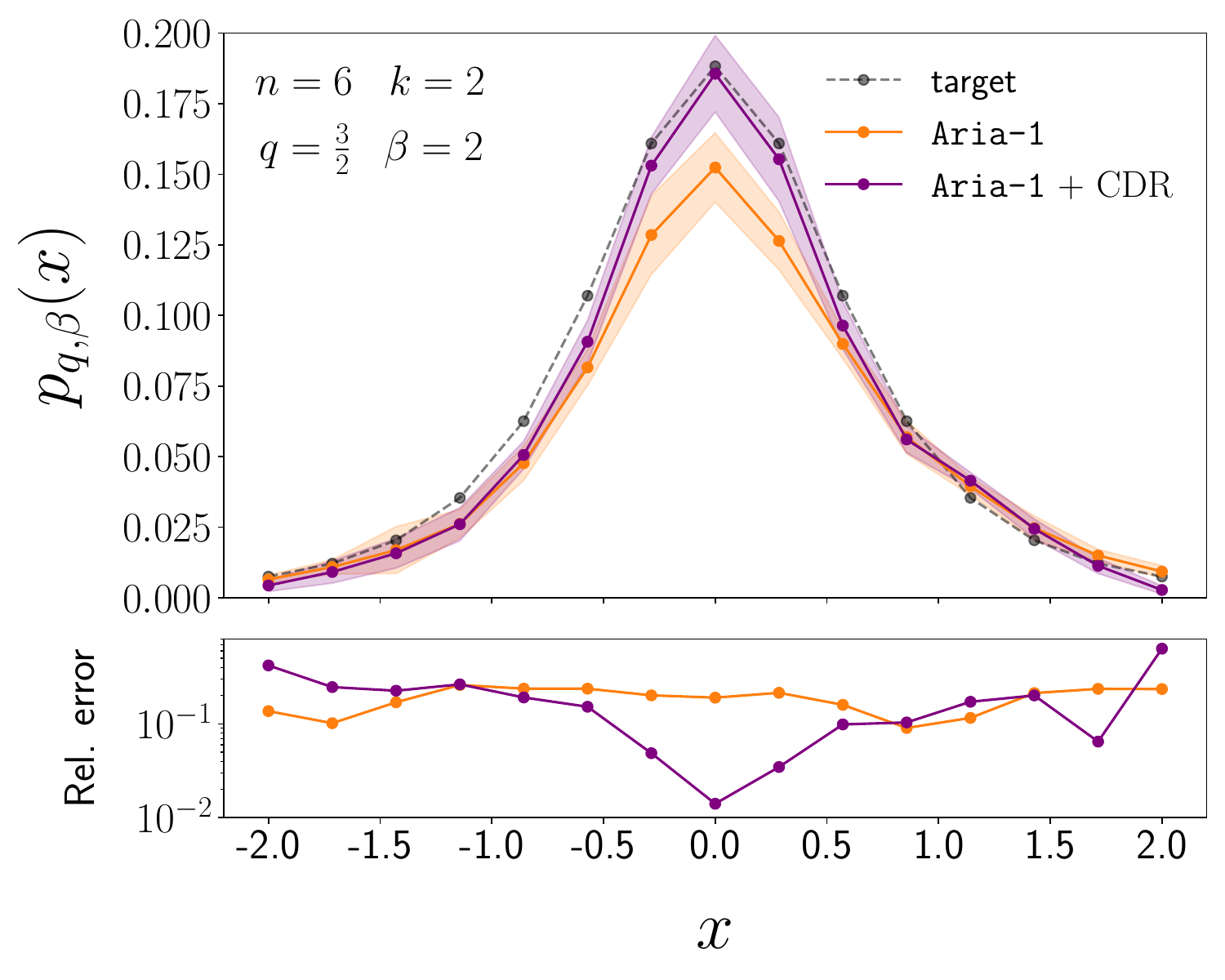}
    \caption{
        \textbf{Experimental deployment on \ionq.}
        \emph{(Top)} Experimental implementation of the encoding of a $q$-Gaussian probability distribution $p_{q,\beta}(x)$ on \ionq's \texttt{Aria-1} processor.
        The domain of $p_{q,\beta}(x)$ was truncated to $x \in [-2, 2]$ and discretized into $d=15$ points (dashed black line). 
        The circuit deployed used $n=6$ and $k=2$ and is shown in Fig. \ref{fig:hwk2circuitexamples}. 
        The solid purple (orange) line represents experimental results with (without) error mitigation using Clifford Data Regression ($\mathrm{CDR}$). 
        Shaded areas represent the uncertainty regions estimated via bootstrapping.
        \emph{(Bottom)} Relative error of experimental implementation of $p_{q,\beta}(x)$ w.r.t. the target function, in log scale. 
    }
    \label{fig:ionq}
\end{figure}

We use the circuit in Fig. \ref{fig:hwk2circuitexamples} for $6$ qubits and Hamming weight $k = 2$, which allows us to encode a data vector of size $d = \binom{6}{2} = 15$. 
This vector corresponds to a discretization of the distribution truncated to the interval $x \in [-2, 2]$. 
For this circuit, the first $4$ parameterized rotations are uncontrolled RBS gates, while the last $10$ $\RBS$s require $1$ control each.
Each $\RBS$ gate is compiled using $2$ $\CNOT$ gates, while each controlled $\RBS$ requires $6$ $\CNOTs$, leading to a circuit with $68$ $\CNOTs$ in total (see Tab. \ref{tab:cgRBScompilation} in App. \ref{app:compilation_one_gate}).
However, the \texttt{Aria-1} processor uses the M{\o}lmer-S{\o}rensen ($\operatorname{MS}$) gate as the two-qubit native gate \cite{IONQ}.
In this case, it is possible to implement one-controlled $R_{y}$ rotations using only one $\operatorname{MS}$ gate.
This allows us to implement the aforementioned circuit using $48$ $\operatorname{MS}$ gates.

In Fig. \ref{fig:ionq}, we show the results of the experimental implementation.
In the top panel, we plot the probability distribution estimated from the experiment by running the circuit $10^{4}$ times and measuring on the computational basis on each qubit to recover the encoded data. 
The solid orange line shows the empirical probability distribution estimated from the raw experimental data from the \texttt{Aria-1} quantum processor.
The solid purple line shows the estimated probability distribution after error mitigation using \emph{Clifford Data Regression} (\textrm{CDR}) \cite{czarnik2021, Lowe2021}.
 \textrm{CDR} models hardware noise by simulating near-Clifford circuits similar to the one at hand and then applying the trained response function to raw experimental data. 
 The training of the response functions was performed using \textrm{IonQ}'s capabilities for classical simulation of noisy circuits (see App.~\ref{app:cdr} for details).
The dashed black line shows the ideal target distribution. 
All relative errors concerning the target distribution are plotted in Fig.~\ref{fig:ionq} \emph{(Bottom)} in log scale.
We see that the data recovered from the experiment differs from the target distribution by $10$-$30\,\%$ in relative error.
After \textrm{CDR} is applied to the experimental data, relative errors can be improved.
For instance, the \textrm{CDR} protocol implemented improved the relative error at the peak of the distribution by one order of magnitude. 
The improvement can also be seen in the overall state infidelity $1-\Tr\left(\rho\,\sigma\right)$ ($\sigma$ being the state prepared by the noisy circuit), which decreased from $0.17$ to $0.08$ after error mitigation.
We attribute these discrepancies in the experimental results to the fact that \texttt{Aria-1} is a noise intermediate-scale quantum (\nisq) device.
The noise present in the circuit creates amplitudes \quotes{outside} of the subspace of interest.
Consequently, there is a reduction in the probability density \quotes{inside} said subspace.
We showed that this effect can be reduced by the application of error mitigation techniques such as \textrm{CDR}.

\subsection{HW-$k$ encoder under local depolarizing noise}
\label{sec:noise}

\begin{figure}[t!]
    \centering
    \includegraphics[width=\columnwidth]{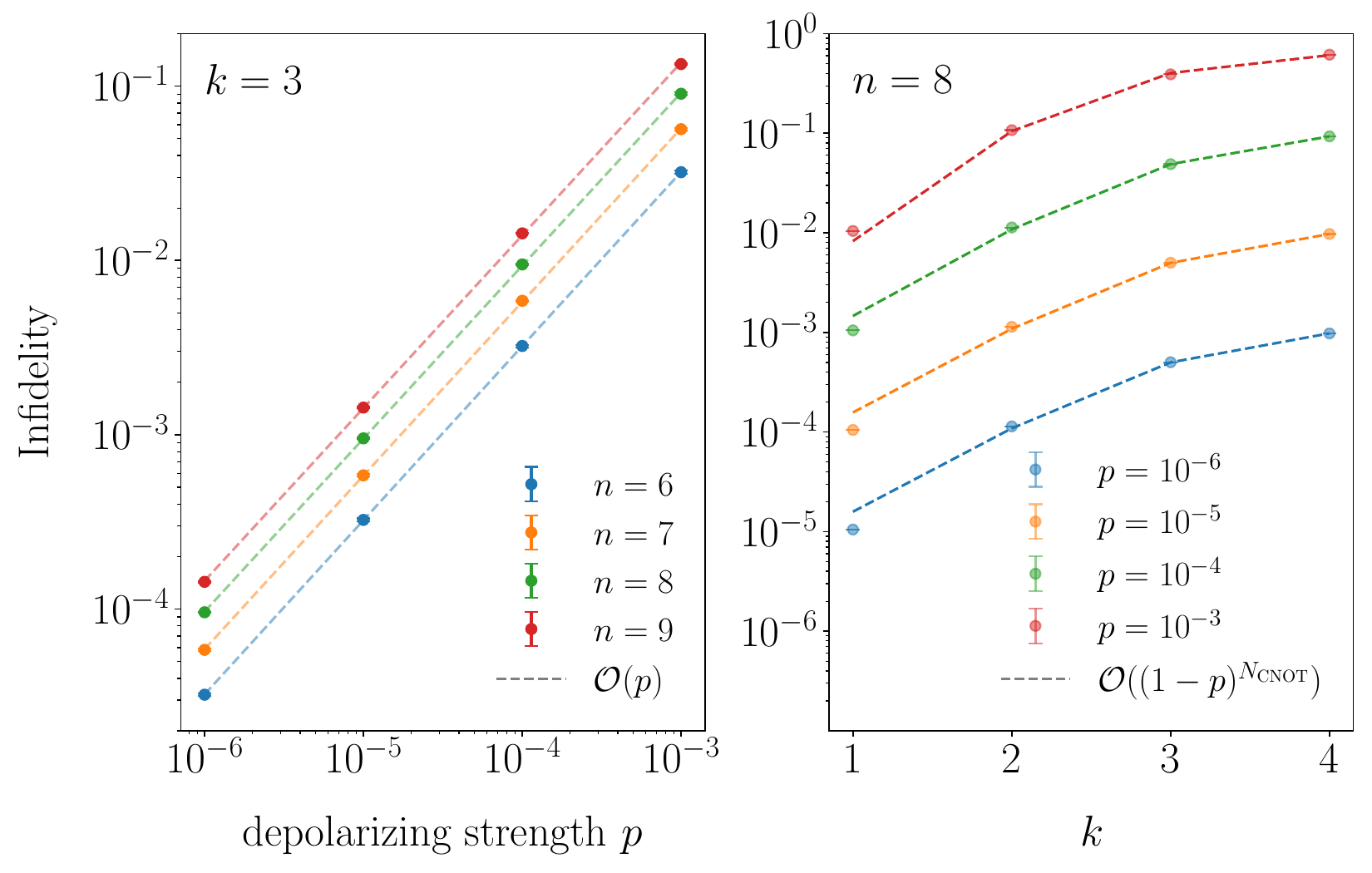}
    \caption{
        \textbf{Performance of HW-$k$ encoder under local depolarizing noise model.}
        Circuits for real-valued data are compiled into $\CNOTs$ and single-qubit gates (see Secs. \ref{sec:preliminaries} and \ref{sec:real_dense}).
        Each $\CNOT$ is followed by a two-qubit depolarizing channel of strength $p$.
        (\emph{Left}) State infidelity as a function of depolarizing strength $p$, averaged over $10$ Haar-random instances, for fixed $k = 3$ and $n\in[5,9]$. Dashed lines represent a linear fit $\order{p}$. 
        (\emph{Right}) State infidelity as a function of $k$, average over $10$ Haar-random instances for fixed $p \in \{10^{-3}, \, 10^{-4}, \, 10^{-5}, \, 10^{-6} \}$ and $n = 8$.
        Dashed lines represent curve fit of the function $\order{(1-p)^{N_{\CNOT}}}$ to the numerical data, where $N_{\CNOT} \equiv N_{\CNOT}(n, \, k) = \mathcal{O}\left(k \, \binom{n}{k}\right)$ is the number of $\CNOTs$ in each circuit for a given $n$ and $k$.
    }
    \label{fig:robustness}
\end{figure}

Motivated by the results in the Sec. \ref{sec:hardware}, in this section we numerically investigate the robustness of the HW-$k$ encoder in the presence of circuit noise as a function of both $k$ and the noise strength. 
For the noise model, given a pure quantum state $\rho$ and any pair of qubits labelled by $A$, we assume every $\CNOT$ gate acting on $A$ is followed by a depolarizing channel $\mathcal{D}_p$,
\begin{align}
    \mathcal{D}_{p}(\rho) = (1 - p) \, \rho + p \, \Tr_{A}(\rho) \otimes \frac{\mathbbm{1}}{4} \, ,
    \label{eq:depolarizing_channel}
\end{align}
where $p$ is the depolarizing strength, $\mathbbm{1}$ is the two-qubit identity matrix, and $\Tr_{A}(\cdot)$ is the partial trace over $A$. 
This noise model is compatible with reported noise models derived from randomized benchmarking techniques \cite{Helsen2022}.
Since two-qubit gate errors are the main source of circuit noise in \nisq devices, we assume that all single-qubit gates are noiseless.
We compiled the circuits resulting from Alg. \ref{alg:hw-enc} into $\CNOTs$, single-qubit gates, and multi-controlled $R_{y}$ rotations as described in Secs. \ref{sec:preliminaries} and \ref{sec:real_dense}, and App. \ref{app:compilation_one_gate}.
While the circuits were simulated using the \texttt{Qibo} package \cite{qibo2021}, the multi-controlled $R_{y}$ gates were compiled into $\CNOTs$ and single-qubit gates using the integration between the \texttt{qclib} library \cite{Araujo2023} and the \texttt{Qiskit} package \cite{qiskit2024}.

In Fig. \ref{fig:robustness} (left panel), we show a $\log$-$\log$ plot of the resulting infidelity for fixed $k = 3$ as a function of the depolarizing strength $p$ for $n \in [5, \, 9]$. 
We observed that for fixed $n$ and $k$, the average infidelity scales polynomially with $p$, namely $\order{p^\gamma}$ with $\gamma\approx1$. 
For $p \sim 10^{-3}$, which is a per-gate noise level compatible with several of the currently available quantum platforms \cite{IONQCloud, IBMQuantum, Quantinuum, Willow2024}, the infidelities in the range of $n$ studied are in between $10^{-2}$ and $10^{-1}$. 
In Fig. \ref{fig:robustness} (right panel), we plot the infidelity (in $\log$ scale) as a function of $k$ for fixed $n = 8$, and $p \in \{10^{-6}, \, 10^{-5}, \, 10^{-4}, \, 10^{-3} \}$. 
We observed that for fixed $n$ and $p$, the average infidelity scales exponentially with $k$ as $\order{(1-p)^{N_{\CNOT}(n,k)}}$, where $N_{\CNOT}(n,k)$ is the total CNOT gate count given by Theorem \ref{lemma:totalCNOTcount_real_dense}. This scaling can be intuitively explained from the fact that the depolarizing channel is isotropic, namely, the composition of $N_{\CNOT}$ local depolarizing channels is equivalent to a global depolarizing channel with strength $1-(1-p)^{N_{\CNOT}}$.
As expected from \nisq devices, the accumulation of errors on deep circuits leads to the preparation of states with high infidelity concerning the target states.
However, as reflected by the linear scaling in $p$, we see that each order-of-magnitude reduction in $p$ yields also an order-of-magnitude reduction in infidelity.

\subsection{HW-$k$ encoder as a variational quantum ansatz}
\label{sec:numerical}

\begin{figure}[t!]
    \centering
    \includegraphics[width=\columnwidth]{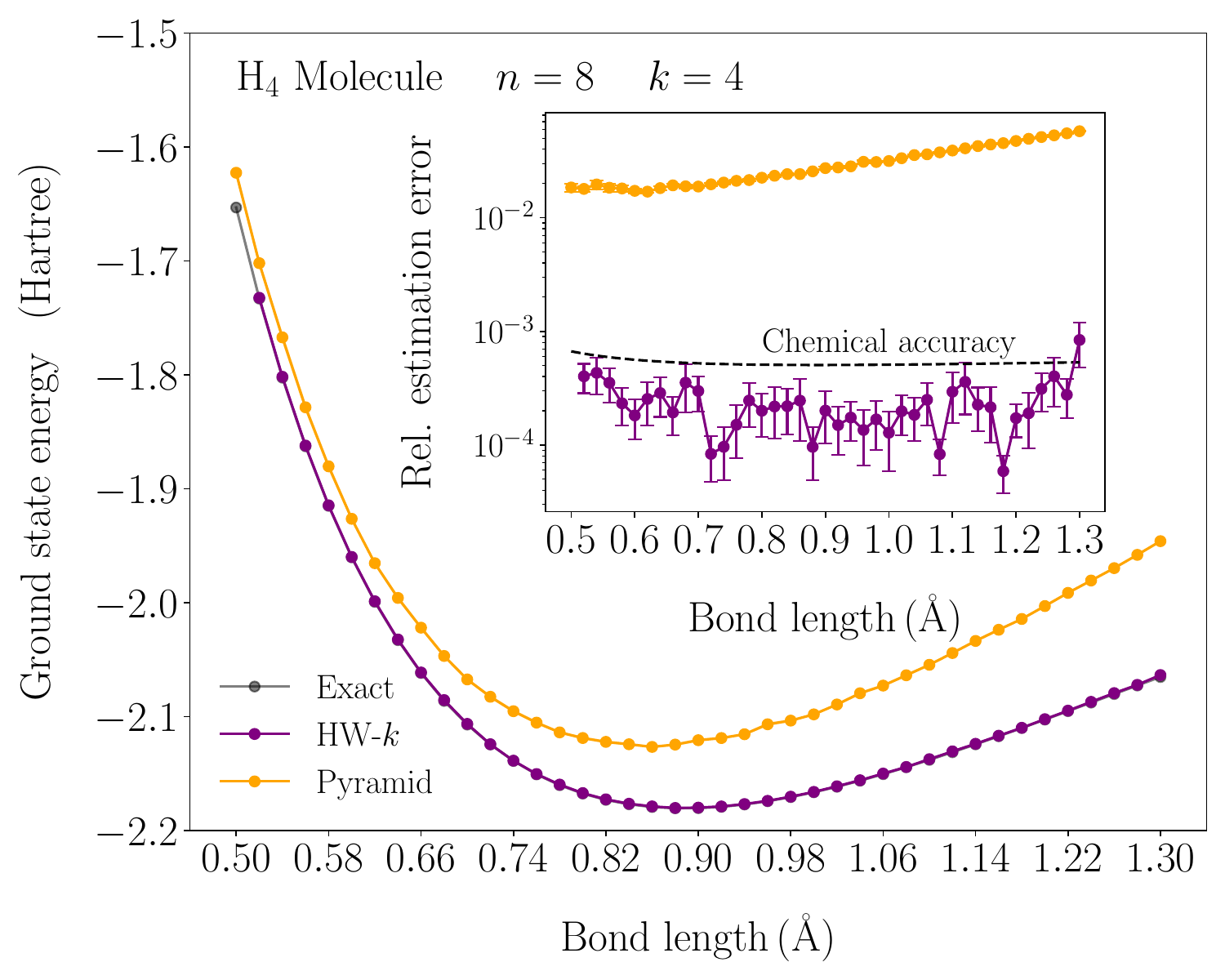}
    \caption{
        \textbf{Hamming-weight-$k$ encoder as a variational quantum ansatz.}
        (\emph{main}) Ground state energy (in Hartree) for the $\operatorname{H_{4}}$ molecule as a function of the bond length (in Angstrom).
        The purple curve represents the energies obtained using Alg. \ref{alg:hw-enc} with $n = 8$ and $k = 4$, while the orange curve was obtained using the \quotes{pyramid} circuit architecture from Ref. \cite{Cherrat2024}.
        The number of gradient descent epochs is fixed in both cases to $1000$, results are averaged over $50$ random initializations, and error bars are smaller than the markers.
        The \emph{exact} energy curve calculated from brute-force diagonalization is plotted in gray and is nearly indistinguishable from the purple curve.
        (\emph{inset}) Energy estimation error relative to the exact ground state energy as a function of the bond length.
        The color scheme is the same as in the main plot.
        Relative chemical accuracy is plotted as a dashed black line.
    }
    \label{fig:vqe}
\end{figure}

Here, we demonstrate the use of the encoders presented in Secs. \ref{sec:hamming_weight}-\ref{sec:binary} as a variational quantum ansatz.
We focus on the HW-$k$ encoder of dense real-valued data, though results immediately extend to complex-valued or sparse data as well as binary encoders.

We first recall that, given $n$ and $k$, the quantum circuits generated by Alg. \ref{alg:hw-enc} parameterize the subspace spanned by the basis $B_{k}$ of size $d = \binom{n}{k}$ defined in Eq. \eqref{eq:HWk_basis}. 
Therefore, this parametrization is ideally suited for solving optimization tasks that have some type of Hamming weight constraints.
For instance, this is the case for ground-state energy estimation of molecular Hamiltonians \cite{Anselmetti2021, Arrazola2022}, which have particle number-preserving symmetry, as well as portfolio optimization constrained by a fixed number of assets \cite{He2023}. 
Without loss of generality, here we illustrate this feature using a quantum chemistry example.
Given a molecule having $k$ electrons and a basis set describing its orbitals, the Jordan-Wigner transform maps its fermionic Hamiltonian into a qubit Hamiltonian $H$ where each qubit corresponds to one of the $n$ molecular spin-orbitals and the state $\ket{1}$ ($\ket{0}$) is associated with an occupied (unoccupied) spin-orbital. 
The ground state of the qubitized molecular Hamiltonian is therefore constrained to a subspace $B_{k}$ of the form \eqref{eq:HWk_basis}, and one can use the corresponding HW-$k$ circuit $\LOAD_{B_k}$ as a variational ansatz to approximate the ground state by classically optimizing the $\RBS$ gate angles $\bm{\theta} \coloneqq \{\theta_j\}_{j\in[d-1]}$ to minimize the energy function 
\begin{align}
E(\bm{\theta}) \coloneqq \bra{0^{n}} \, \LOAD_{B_{k}}^{\dagger}\!(\bm{\theta}) \, H \, \LOAD_{B_{k}}\!(\bm{\theta}) \, \ket{0^{n}} \, .
\label{eq:loss_function}
\end{align} 

Below we numerically demonstrate this by finding the ground state energy of the $\operatorname{H_{4}}$ molecule, with the qubitized Hamiltonian obtained via the quantum dataset from \texttt{Pennylane} \cite{Pennylane2022, Utkarsh2023}.
Using the $\operatorname{STO-3G}$ basis set, this molecule can be described by a $n=8$ qubit Hamiltonian (corresponding to $8$ spin-orbitals) with $k=4$ electrons in the ground state.
The results are presented in Fig. \ref{fig:vqe}.
In the main panel, we plot the ground state energy (in Hartree) of the $\operatorname{H_{4}}$ molecule as a function of the bond length (in Angstrom) between the Hydrogen atoms.
We estimate the energies using our amplitude encoder $\LOAD_{B_4}$ as an ansatz (solid purple line) and compare our results with the \quotes{pyramid} HW-preserving ansatz from Ref. \cite{Cherrat2023} (solid orange line), composed solely of uncontrolled $\RBS$ gates. 
For a fair comparison, we allow this ansatz to have the same number of parameters as ours, \emph{i.e.} $\binom{n}{k} - 1 = 69$ parameters. 
We did not observe any improvement in expressivity of the pyramid ansatz by adding more parameters beyond this number, hence we did not compare the two ansatze by equating their number of two-qubit gates. 
We leveraged the integration between \texttt{Qibo} and \texttt{PyTorch} \cite{Pytorch2019} and used the $\operatorname{ADAM}$ optimizer \cite{Kingma2017} with fixed learning rate $\eta = 10^{-3}$ as the gradient descent method of choice for both ansatze.
We plot the best energy results after $1000$ epochs, averaged over $50$ Haar-random initializations.
The \quotes{exact} ground state energies, calculated via brute-force diagonalization of the corresponding Hamiltonians, are plotted in a solid black line.
We see that our encoder approximates very well the ground state energy of the $\operatorname{H_{4}}$ molecule throughout the entire energy curve, unlike the pyramid ansatz, whose results worsen with increased bond length. 
The \emph{inset} of Fig. \ref{fig:vqe} shows the energy estimation errors relative to the exact ground state energies. 
We see that, after $1000$ epochs, the HW-$k$ ansatz on average reached chemical accuracy while the pyramid ansatz remained at least an order of magnitude away throughout the entire bond length domain.

\section{Conclusions} 
\label{sec:conclusions}

We provide an efficient and explicit classical algorithm to construct, gate by gate, a quantum circuit that uploads an arbitrary data vector into a subspace of fixed Hamming-weight quantum states. 
The quantum circuit uses the minimum number of parameterized gates needed to express generic data of a given dimension. 
The construction uses (generalized) RBS gates and allows us to precisely state the quantum resources needed for its execution, as well as deploy a proof-of-concept instance on real quantum hardware. 
We also provide the tools needed to further explore quantum encoders for this subspace and beyond.

As shown for the sparse and binary encoder, our HW-$k$ encoder can be used as a subroutine to power more complex algorithms. 
These two examples are but a small sample, and future work will explore more avenues. 
Still, we remark on the importance of binary encoding, as most algorithms would benefit from robust state preparation work on this basis. 
Other encoding schemes, such as the unary-basis encoder, can deploy basis change circuits \cite{RamosCalderer2022} to allow for further post-processing. 
A generalization of this circuit to the HW-$k$ basis would open new possibilities for more efficient general state preparation. 

Furthermore, our algorithm brings interesting implications from the lens of quantum machine learning. 
Variational ansatze that explore a constrained space are popular due to their ability to mitigate barren plateaus \cite{Monbroussou2023}. 
As shown by the numerical experiment with the $\operatorname{H_{4}}$ molecule, our HW-$k$ encoder, due to its expressivity, has the potential to improve performance of optimizations that involve particle-preserving symmetry.
Further investigation of this potential is needed and we leave it as an open question for future work.
Algorithms with space compression that goes beyond linear, as is the case with our HW-$k$ encoder, are also a promising direction for other machine learning schemes \cite{Sciorilli2024}, and show how exploring these subspaces can be successful. 

Lastly, we believe that explicit constructive algorithms for state preparation, with precise analysis of the quantum resources required, are essential to reach useful quantum advantage. 

\paragraph{Note added.} 
\label{sec:note_added}

We note that during the completion of this paper an independent work \cite{Raveh2024deterministic} appeared within the context of Bethe state preparation for integrable spin chains. 
Their results have partial overlap with our HW-$k$ encoders for dense data presented in Sec. \ref{sec:hamming_weight}.

\section{Acknowledgments} 
\label{sec:acknowledgments}

We thank Ariel Bendersky and Andr\'{e} J. Ferreira-Martins for insightful discussions.
We thank Jadwiga Wilkens for the availability of the \emph{Quantum Circuit Library} \cite{Wilkens2023}.

\bibliography{references}

\appendix


\section{Gate compilations and $\CNOT$-gate counts}

\subsection{Compilation of multi-controlled gates}
\label{app:compilation_one_gate}

\begin{table*}[t!]
\begin{tabular}{|c|c|c|c|c|c|c|}\hline
     \rule{0pt}{2ex}
     \rule[-1.5ex]{0pt}{0pt}
\diagbox[width=8.em]{\# controls}{Gate}
      & $X$
      & $R_{y}(\theta)$
      & $R^{\bin}_{\tout}(\theta)$
      & $R^{\bin}_{\tout}(\theta, \, \phi)$
      & $R^{\mathcolor{blue}{in_{1}, \ldots, in_{m}}}_{\mathcolor{teal}{out_{1}, \ldots, out_{m^{\prime}}}}(\theta)$
      & $R^{\mathcolor{blue}{in_{1}, \ldots, in_{m}}}_{\mathcolor{teal}{out_{1},\ldots, out_{m^{\prime}}}}(\theta, \, \phi)$\\ \hline
           \rule{0pt}{3ex}
     \rule[-1.5ex]{0pt}{0pt}
$\ell = 0$   & $0$ & $0$      & $2$      & $2$     & $\le 18(m + m^{\prime}) - 42$   & $\le 22 (m + m^{\prime}) - 20$   \\ \hline
     \rule{0pt}{3ex}
     \rule[-1.5ex]{0pt}{0pt}
$\ell = 1$   & $1$ & $2$     & $6$     & $6$     & $\le 18 (m + m^{\prime}) - 26$ & $\le 22 (m + m^{\prime})$ \\ \hline
     \rule{0pt}{3ex}
     \rule[-1.5ex]{0pt}{0pt}
$\ell = 2$ & $6$ & $4$     & $10$     & $14$     & $\le 18 (m + m^{\prime}) - 10$   & $\le 22 (m + m^{\prime}) + 20$   \\ \hline
     \rule{0pt}{3ex}
     \rule[-1.5ex]{0pt}{0pt}
$\ell = 3$ & $\le 24$ & $12$     & $26$     & $38$     & $\le 18 (m + m^{\prime}) + 6$   & $\le 22 (m + m^{\prime}) + 40$   \\ \hline
     \rule{0pt}{3ex}
     \rule[-1.5ex]{0pt}{0pt}
$\ell = 4$ & $\le 40$ &$36$     & $\le 58$     & $\le 84$     & $\le 18 (m + m^{\prime}) + 22$   & $\le 22 (m + m^{\prime}) + 60$   \\ \hline
     \rule{0pt}{3ex}
     \rule[-1.5ex]{0pt}{0pt}     
$\ell \ge 5$ & $\le 16 \ell - 24$ & $\le 16 \ell - 24$     & $\le 16 \ell - 6$     & $\le 20 \ell + 4$     & $\le 18 (m + m^{\prime}) + 16 \ell - 42$   & $\le 22 (m + m^{\prime}) + 20 \ell - 20$   \\ \hline
\end{tabular}
\caption{
\textbf{$\CNOT$ count of multi-controlled $X$, $R_{y}$, $\RBS$ and $\gRBS$ gates used throughout this work.} 
Details are provided in Sec. \ref{app:compilation_one_gate}. 
For each multi-controlled $\RBS$ and $\gRBS$ gate the $\CNOT$ count corresponds to the best between the two compilations in Figs. \ref{fig:rbscomplex} and \ref{fig:generalizedRBS}. 
For multi-controlled-$R_{y}$ gates, we use the $\CNOT$ counts provided in \cite{Barenco1995, Vale2023}. 
The numbers in the second column are the $\chi_{\ell}$ in Lemma \ref{lemma:cRYcompilation}, while the second and third columns correspond to $C_{\ell}$ used in Eq. \eq{totalCNOTcount} for the real and complex cases, respectively. 
}
\label{tab:cgRBScompilation}
\end{table*}
Here, we calculate the $\CNOT$ cost of the $\RBS$ gate, the $\gRBS$ gate, and their multi-controlled versions. 
The following lemma will be useful. 

\begin{lemma}[$\CNOT$ cost of a multi-controlled $\operatorname{U}(2)$]
\label{lemma:cRYcompilation}
Let $\chi_{\ell}(U)$ be the number of $\CNOT$ gates to implement a $\ell$-controlled single-qubit gate $U \in \operatorname{U}(2)$, where $\ell = 0$ corresponds to no controls. 
Then $\chi_{0}(U) = 0$, $\chi_{1}(U) = 2$, $\chi_{2}(U) = 4$, $\chi_{3}(U) = 12$, $\chi_{4}(U) = 36$, and, for $\ell \ge 5$, $\chi_{\ell}(U) \le 20 \ell - 18$.
Moreover, if $U = R_{y}(\theta)$ or $U = R_{z}(\theta)$,the bound for $\ell \ge 5$ can be improved to $\chi_{\ell}(U) \le 16 \ell - 24$.
\end{lemma}

\begin{proof}
    Exact compilations for $\ell \le 4$ follow from Ref. \cite{Barenco1995,Iten2016}. 
    The upper bounds for $\ell \ge 5$ are given by Ref. \cite{Vale2023}.
\end{proof}

Depending on the chosen decomposition for the $\RBS$ gate, the $\gRBS$ gate admits different decompositions into controlled-$R_{y}$ and $R_z$ gates (\emph{e.g.} see Fig. \ref{fig:rbscomplex}). 
Using Lemma \ref{lemma:cRYcompilation}, we can now calculate the $\CNOT$ cost of a multi-controlled $\gRBS$ gate as follows. 

\begin{lemma}[$\CNOT$ cost of a multi-controlled complex $\gRBS$ gate]
\label{lemma:cgRBScompilationcomplex}
Let $\chi_{\ell}$ be as defined in Lemma \ref{lemma:cRYcompilation}, $c_{\vctrlbf}R^{\binbf}_{\toutbf}(\theta, \, \phi)\equiv c_{\mathcolor{violet}{ctrl_{1}, \ldots, ctrl_{\ell}}}R^{\mathcolor{blue}{in_{1}, \ldots, in_{m}}}_{\mathcolor{teal}{out_{1}, \ldots, out_{m^{\prime}}}}$ be as in Sec. \ref{sec:preliminaries}, and $\mu = \ell + m + m^{\prime}$. 
For $\phi = 0$, the number of $\CNOTs$ to compile this $\mu$-qubit gate is
\begin{align*}
2(m + m^{\prime} - 1) +\min\!\left(2 \chi_{\mu - 2}(R_{y}), \,  \chi_{\mu - 1}(R_{y})\right).
\end{align*}
If $\phi \ne 0$, then the cost is
\begin{align*}
2(m + m^{\prime} - 1) + \min\!\left(4 \chi_{\mu - 2}(R_{y}), \, \chi_{\mu - 1}(U)\right). 
\end{align*}
\end{lemma}

\begin{proof}
($i$) Case $\phi = 0$: Since the $R_{z}(\phi)$ gates are absent, the $\ell$ controls can act directly on the $R_{y}$ gates (see Fig. \ref{fig:generalizedRBS}). 
As a result, $c_{\mathcolor{violet}{ctrl_{1}, \ldots, ctrl_{\ell}}}R^{\mathcolor{blue}{in_{1}, \ldots, in_{m}}}_{\mathcolor{teal}{out_{1}, \ldots, out_{m^{\prime}}}}$ is equivalent to $2 (m - 1) + 2(m^{\prime} - 1) + 2$ $\CNOTs$ and two ($\mu - 2$)-controlled $R_{y}$ gates for the compilation in Fig. \ref{fig:rbscomplex} (\textit{Top}), or $2(m - 1) + 2(m^{\prime} - 1) + 2$ $\CNOTs$ and a single ($\mu - 1$)-controlled $R_{y}$ gate for the compilation in Fig. \ref{fig:rbscomplex} (\textit{Bottom}). 
The claim then follows from compiling the controlled $R_{y}$ gates using Lemma \ref{lemma:cRYcompilation} and choosing the less costly compilation between the two. 
\emph{Top} is the best compilation for $\mu \le 4$, while \emph{Bottom} is the best otherwise.

\noindent ($ii$) Case $\phi \ne 0$: When using the $\RBS$ compilation in Fig. \ref{fig:rbscomplex} (\textit{Top}) to build a complex $\gRBS$, it is necessary to perform $2$ $(\mu - 2)$-controlled $R_{y}$ gates as well as $2$ $(\mu - 2)$-controlled $R_{z}$ gates.
Since these gates have the same CNOT cost \cite{Vale2023}, we express the total cost as $4$ times the cost of one $(\mu - 2)$-controlled $R_{y}$ gate.
On the other hand, the product $R_{z}(\phi) \, R_{y}(\theta)$ in the compilation shown in Fig. \ref{fig:rbscomplex} (\textit{Bottom}) corresponds to a single $\operatorname{SU}(2)$ of the form 
$U = e^{i\lambda W}$, where $\lambda(\theta,\phi)=\arccos(\cos\theta\cos\phi)$ and $W(\theta,\phi)=\frac{1}{\sin\lambda(\theta,\phi)}(-\sin\theta\sin\phi\,X+\sin\theta\cos\phi\,Y+\cos\theta\sin\phi\,Z)$. 
The proof then proceeds identically to Case ($i$). 
\emph{Top} is the best compilation for $\ell = 0$ and $m = m^{\prime} = 1$, while \emph{Bottom} is the best otherwise.
\end{proof}

\noindent{Explicit} $\CNOT$ counts for all the gates used throughout this work are summarized in Table \ref{tab:cgRBScompilation}. 

\subsection{$\CNOT$ cost of sparse and binary encoders}
\label{app:cnot_cost_sparse_binary}

Using the results of Sec. \ref{app:compilation_one_gate}, we numerically investigate the $\CNOT$ cost of implementing the sparse and binary encoders from Secs. \ref{sec:sparse} and \ref{sec:binary}, respectively.

Fig. \ref{fig:plot_appendix}(a) compares, as a function of the number of qubits $n$, the average $\CNOT$ cost of the sparse encoder presented in Alg. \ref{alg:sparse-hw-enc} with the $\operatorname{QRAM}$-like sparse encoder from Ref. \cite{Veras2022} as well as the ancilla-free state preparation from Ref. \cite{Shende2006}.
While the former is implemented in the \texttt{qclib} package \cite{Araujo2023}, the latter is currently implemented in the \texttt{Qiskit} package \cite{qiskit2024}.
We used the \texttt{Qibo} package \cite{qibo2021} to implement Alg. \ref{alg:sparse-hw-enc}.
The $\CNOT$ cost is averaged over the encoding of $100$ random sparse vectors $\bf{y}$ of dimension $2^n$ and sparsity $s \in \{n, \, n^{2}\}$. 
Each $\bf{y}$ was obtained by sampling a Haar-random vector $\vecx\in\mathbb{C}^s$ and embedding it into a $2^n$-dimensional sparse data vector by associating each entry $x_j$ with a random computational basis state $\ket{b_j}$ sampled from the uniform distribution. 
Ref. \cite{Shende2006} displayed a scaling of $\order{2^{n}}$ in $\CNOT$-gate count independent of the sparsity $s$ (solid and dashed blue lines). 
Even though Ref. \cite{Shende2006} provides an ancilla-free method for state preparation, it is shown to not be the most suitable for the preparation of sparse states.
The sparse encoder from Ref. \cite{Veras2022} presented a scaling of $\sim \order{n^{2.2}}$ for $s = n$ (solid greed line), and as $\sim \order{n^{3.12}}$ for $s = n^{2}$ (dashed green line).
This is a significant improvement when compared with Ref. \cite{Shende2006}. However, the algorithm from Ref. \cite{Veras2022} requires $\order{n}$ ancillary qubits to harness that improvement.
Meanwhile, the sparse encoder in Alg. \ref{alg:sparse-hw-enc} scales as $\sim \order{n^{3.29}}$ for $s = n$ (solid purple line), and as $\sim \order{n^{4.14}}$ for $s = n^{2}$ (dashed purple line).
These scalings are $\sim \order{n}$ inferior to the ones from Ref. \cite{Veras2022} for both sparsity levels tested.
We highlight, however, that Alg. \ref{alg:sparse-hw-enc} is ancilla-free.
This points to a trade-off between ancillary qubits and $\CNOT$-gate cost.
Algorithm \ref{alg:sparse-hw-enc} allows one to have a sparse amplitude encoder that is both $\operatorname{poly}(n)$ in $\CNOT$ cost and ancilla-free, at the price of having to implement linearly deeper circuits.

Fig. \ref{fig:plot_appendix}(b) shows the asymptotic behavior $\sim 8 \, n \, 2^{n}$ of the total gate count for the binary encoder (see Lemma \ref{lemma:binary} in the main text).

\begin{figure*}[!t]
    \includegraphics[width=\textwidth]{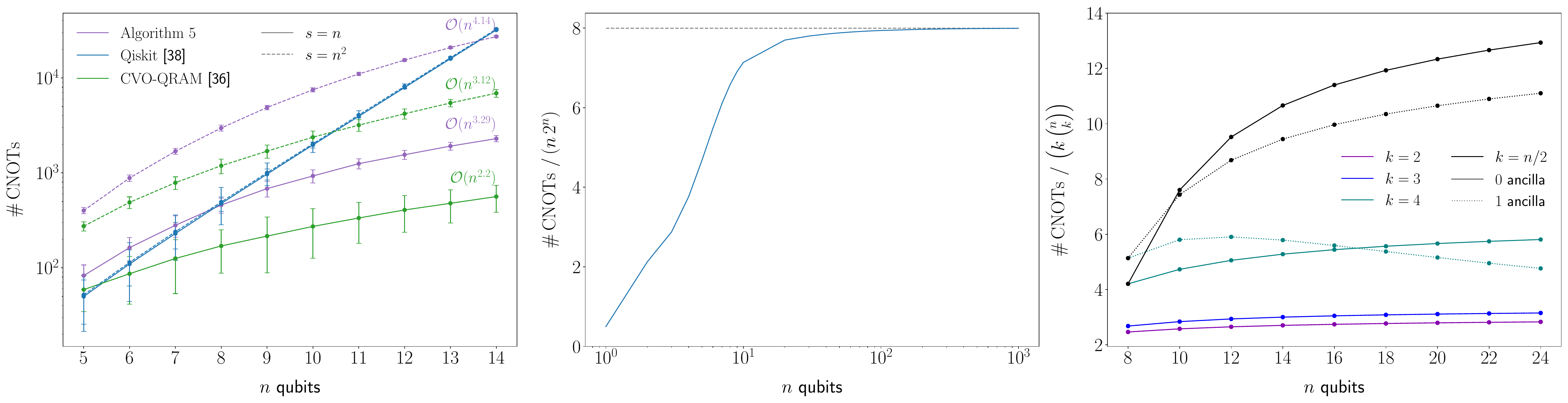}
    \caption{
      \emph{(left)} \textbf{Comparison of the average $\CNOT$ cost of Alg. \ref{alg:sparse-hw-enc} with Refs. \cite{Shende2006, Veras2022}.} 
      $\CNOT$-gate count (in log scale) as a function of the number of qubits $n$.
      We compare the costs of Alg. \ref{alg:sparse-hw-enc} (purple lines) with the algorithms from Refs. \cite{Veras2022} (green lines) and \cite{Shende2006} (blue lines).
      We present results averaged over $100$ Haar-random states $\in \mathbb{C}^{s}$, each embedded onto a data vector $\vecx \in \mathbb{C}^{2^{n}}$.
      Sparsity levels $s \in \{n, \, n^{2}\}$ are represented by solid and dashed lines, respectively.
      Error bars indicate one standard deviation.
      \emph{(middle)} \textbf{$\CNOT$-gate count of the binary encoder.} 
      Count based on Eq. \eqref{eq:cnotcountbinary} in Lemma \ref{lemma:binary} as a function of number of qubits $n$ (in log scale). 
      We used the compilation of controlled $R_{y}$ rotations from Tab. \ref{tab:cgRBScompilation}. 
      We observe an asymptotic behavior towards $8\,n\,2^n$ for large $n$.
      \emph{(right)} \textbf{$\CNOT$ cost of Alg. \ref{alg:hw-enc} with the addition of one ancilla.} 
      We compare the $\CNOT$ cost of Alg. \ref{alg:hw-enc} (solid lines) with the $\CNOT$ cost of the same algorithm with the additional step of including a clean ancillary qubit and using it to heuristically remove redundant controlled operations (dotted lines). 
      The comparison is performed for Hamming weights $k \in \{2, \, 3, \, 4, \, n/2 \}$
      For $k \in \{2, \, 3\}$ (purple and blue lines, respectively), the ancilla does not help decrease the $\CNOT$ significantly, if at all, independently of $n$. 
      For $k = 4$ (green lines), the addition of an ancillary qubit increases the number of $\CNOTs$ in the circuits for $n \leq 18$.
      This is due to the cost of extra controlled Toffoli gates required.
      For $n > 18$, the heuristic algorithm starts removing $\CNOTs$ of the original circuit, and the advantage with respect to the ancilla-free circuit increases with $n$.
      For $k = n/2$ (black lines), we can see that the contribution of the one clean ancilla appears at $n > 12$, and increases with $n$.
    }
    \label{fig:plot_appendix}
\end{figure*}

\section{Explicit examples}
\label{app:tables}

Here, we present three tables illustrating with explicit examples the execution of the classical algorithms used to generate the quantum circuits for our HW encoder algorithms. Each table corresponds to one of the figures present in the main text, namely: 
Table \ref{tab:example6choose2} illustrates the real dense encoder (Alg. \ref{alg:hw-enc}) and corresponds to Fig. \ref{fig:hwk2circuitexamples}; 
Table \ref{tab:example6choose3} illustrates the real sparse encoder (Alg. \ref{alg:sparse-hw-enc}) and corresponds to Fig. \ref{fig:hwk3circuitexamples}; 
finally, Table \ref{tab:hwktobinary} illustrates the binary encoder that combines the dense HW-$k$ for all possible $k\in[0,n]$ and corresponds to Fig. \ref{fig:hwtobincircuitexample}.

\section{Using one ancilla to reduce $\CNOT$ costs}
\label{app:cnc}

Here, we present a heuristic method that uses one ancilla to remove controlled operations in the later stages of the circuits generated by Algs. \ref{alg:hw-enc} and \ref{alg:hw-enc-cplx} as mentioned in Sec. \ref{sec:hamming_weight}.
The results are shown in Fig. \ref{fig:plot_appendix}(c). 
For a given $k$, the maximum number of indices in the $\vctrlbf$ set is $k-1$. 
Thus, consecutive controlled $\RBS$ gates that share the same $k - 1$ indices in their respective $\vctrlbf$ sets can be implemented together.
The \emph{\quotes{ancilla trick}} goes as follows.
First, we add one ancillary qubit to the system.
Then, a controlled Toffoli gate is added before the target consecutive controlled $\RBS$s.
This Toffoli gate acts on the ancilla and is controlled by the same qubits that are in the aforementioned $\vctrlbf$ set.
The next step is the replacement of the set of controls $\vctrlbf$ of the target $\RBS$ gates by $\vctrl = \{n + 1\}$, where $n + 1$ is the index of the newly added ancillary qubit.
Finally, the same Toffoli gate is added to the circuit after the stacked $\RBS$ gates, undoing the previous Toffoli operation.
The same \quotes{trick} can be applied to stacked $\RBS$ gates that have the same $k - 2$ indices in $\vctrl$, and so on.
The process ends when adding the two controlled Toffoli gates is more costly than implementing the original gates.

\section{Clifford Data Regression}
\label{app:cdr}

Clifford Data Regression (\textrm{CDR}) is a data-driven protocol to mitigate errors on expectation values estimated on \nisq devices \cite{czarnik2021, Lowe2021}. 
The protocol to implement \textrm{CDR} takes as inputs a quantum circuit $\mathcal{C}$ and an expectation value $\mu^{(0)}$ and repeats the following primitive:
\textit{(i)} construct a near-Clifford circuit by randomly replacing most, if not all, non-Clifford gates in the original circuit with Clifford gates;
\textit{(ii)} run both classical simulation and noisy implementation of the near-Clifford circuit, obtaining a set $\mathcal{S}$ containing new expectation values $\mu_{\textup{noiseless}}$ and $\mu_{\textup{noisy}}$ coming from the noiseless and noisy circuit, respectively, 
\begin{equation}\label{eq:cdr_set}
\mathcal{S} \coloneqq \left\{\left(\mu_{\textup{noisy}}^{(j)}, \, \mu_{\textup{noiseless}}^{(j)}\right)\right\}_{j \in \left[\,\abs{\mathcal{S}}\,\right]} \, ,
\end{equation}
where $\abs{\mathcal{S}}$ is the cardinality of the set $\mathcal{S}$.
The next step is to fit a regression model $f$ on $\mathcal{S}$ such that 
\begin{equation}
\mu_{\textup{noiseless}}^{(j)} \approx f_{\mathcal{S}}\left(\mu_{\textup{noisy}}^{(j)}\right), \, \forall \, j \in \left[\,\abs{\mathcal{S}}\,\right] \, .
\end{equation}
With that at hand, after executing the original circuit on the noisy hardware and obtaining the noisy expectation value $\mu_{noisy}^{(0)}$, the last step is to use the fitted model to get an error-mitigated version of $\mu_{\textup{noisy}}^{(0)}$, \emph{i.e.}
\begin{equation}
    \mu_{\textup{mitigated}}^{(0)} \approx f_{\mathcal{S}}\left(\mu_{\textup{noisy}}^{(0)}\right) \, .
\end{equation}
The error-mitigated expectation value $\mu_{\textup{mitigated}}^{(0)}$ is then the final empirical estimator of $\mu^{(0)}$.

In our proof-of-principle demonstration described in Sec.~\ref{sec:hardware}, we implemented \textrm{CDR} following the protocol detailed above training a linear model for each expectation value separately.
In doing so, we first compiled the original circuit in a way that the only non-Clifford gates present were $R_{y}$ rotations with at most one control qubit.
We call replacement rate, $r \in (0, 1)$, the number of $R_{y}$ gates replaced in the original circuit to generate each near-Clifford circuit $j$.
We created our set of near-Clifford circuits using replacement rates $r \in \{0.79, \, 0.82, \, 0.89, \, 0.93, \, 1.00 \}$ and sampling $300$ circuits per replacement rate, totaling $\abs{\mathcal{S}} = 1500$ data points.
The random Cliffords to be inserted were sampled uniformly from the set of all possible single- and two-qubit Clifford gates.
We also uniformly sampled the positions in the circuit of the $R_{y}$ gates that were replaced. 
The set \eq{cdr_set} was generated by classically simulating noiseless circuits locally while simulating the same circuits using the \ionq's proprietary noise models on \ionq's \textrm{Quantum Cloud} \cite{IONQCloud}.3
Every probability estimated from the hardware experiment was treated as an expectation value over a rank-$1$ projector in the computational basis and the mitigation procedure followed as described above by fitting one regression model per probability.

\begin{table}[t!]
\setlength{\tabcolsep}{0pt}
\begin{NiceTabular}
   [columns-width=1.1cm,hlines]
   {>{\rule[-2mm]{0pt}{.55cm}}*{1}{ccccccc}}
   $q_6$ & $q_5$ & $q_4$ & $q_3$ & $q_2$ & $q_1$ & Gate\\
   
   $\mathcolor{violet}{\overline{1}}$ & $\mathcolor{blue}{\overline{1}}$ & $0$ & $0$& $0$& $\mathcolor{teal}{0}$ &$\bcancel{c_{\mathcolor{violet}{6}}}\text{R}^{\mathcolor{blue}{5}}_{\mathcolor{teal}{1}}(\theta_1)$\\
   
   $\mathcolor{violet}{\overline{1}}$ & $0$ & $\overline{0}$ & $\overline{0}$& $\mathcolor{teal}{\overline{0}}$& $\mathcolor{blue}{1}$ &$\bcancel{c_{\mathcolor{violet}{6}}}\text{R}^{\mathcolor{blue}{1}}_{\mathcolor{teal}{2}}(\theta_2)$\\
   
   $\mathcolor{violet}{\overline{1}}$ & $0$ & $\overline{0}$ & $\mathcolor{teal}{\overline{0}}$& $\mathcolor{blue}{1}$& $0$ &$\bcancel{c_{\mathcolor{violet}{6}}}\text{R}^{\mathcolor{blue}{2}}_{\mathcolor{teal}{3}}(\theta_3)$\\
   
   $\mathcolor{violet}{\overline{1}}$ & $0$ & $\mathcolor{teal}{\overline{0}}$ & $\mathcolor{blue}{1}$& $0$& $0$ &$\bcancel{c_{\mathcolor{violet}{6}}}\text{R}^{\mathcolor{blue}{3}}_{\mathcolor{teal}{4}}(\theta_4)$\\
   
   $\mathcolor{blue}{\overline{1}}$ & $\mathcolor{teal}{0}$ & $\mathcolor{violet}{1}$ & $0$& $0$& $0$ &$c_{\mathcolor{violet}{4}}\text{R}^{\mathcolor{blue}{6}}_{\mathcolor{teal}{5}}(\theta_5)$\\
   
   $0$ & $\mathcolor{violet}{\overline{1}}$ & $\mathcolor{blue}{\overline{1}}$ & $0$& $0$& $\mathcolor{teal}{0}$ &$c_{\mathcolor{violet}{5}}\text{R}^{\mathcolor{blue}{4}}_{\mathcolor{teal}{1}}(\theta_6)$\\
   
   $0$ & $\mathcolor{violet}{\overline{1}}$ & $0$ & $\overline{0}$& $\mathcolor{teal}{\overline{0}}$& $\mathcolor{blue}{1}$ &$c_{\mathcolor{violet}{5}}\text{R}^{\mathcolor{blue}{1}}_{\mathcolor{teal}{2}}(\theta_7)$\\
   
   $0$ & $\mathcolor{violet}{\overline{1}}$ & $0$ & $\mathcolor{teal}{\overline{0}}$& $\mathcolor{blue}{1}$& $0$ &$c_{\mathcolor{violet}{5}}\text{R}^{\mathcolor{blue}{2}}_{\mathcolor{teal}{3}}(\theta_8)$\\
   
   $0$ & $\mathcolor{blue}{\overline{1}}$ & $\mathcolor{teal}{0}$ & $\mathcolor{violet}{1}$& $0$& $0$ &$c_{\mathcolor{violet}{3}}\text{R}^{\mathcolor{blue}{5}}_{\mathcolor{teal}{4}}(\theta_9)$\\
   
   $0$ & $0$ & $\mathcolor{violet}{\overline{1}}$ & $\mathcolor{blue}{\overline{1}}$& $0$& $\mathcolor{teal}{0}$ &$c_{\mathcolor{violet}{4}}\text{R}^{\mathcolor{blue}{3}}_{\mathcolor{teal}{1}}(\theta_{10})$\\
   
   $0$ & $0$ & $\mathcolor{violet}{\overline{1}}$ & $0$& $\mathcolor{teal}{\overline{0}}$& $\mathcolor{blue}{1}$ &$c_{\mathcolor{violet}{4}}\text{R}^{\mathcolor{blue}{1}}_{\mathcolor{teal}{2}}(\theta_{11})$\\
   
   $0$ & $0$ & $\mathcolor{blue}{\overline{1}}$ & $\mathcolor{teal}{0}$& $\mathcolor{violet}{1}$& $0$ &$c_{\mathcolor{violet}{2}}\text{R}^{\mathcolor{blue}{4}}_{\mathcolor{teal}{3}}(\theta_{12})$\\
   
   $0$ & $0$ & $0$ & $\mathcolor{violet}{\overline{1}}$& $\mathcolor{blue}{\overline{1}}$& $\mathcolor{teal}{0}$ &$c_{\mathcolor{violet}{3}}\text{R}^{\mathcolor{blue}{2}}_{\mathcolor{teal}{1}}(\theta_{13})$\\
   
   $0$ & $0$ & $0$ & $\color{blue}{\overline{1}}$& $\mathcolor{teal}{0}$& $\mathcolor{violet}{1}$ &$c_{\mathcolor{violet}{1}}\text{R}^{\mathcolor{blue}{3}}_{\mathcolor{teal}{2}}(\theta_{14})$\\
   
   $0$ & $0$ & $0$ & $0$& $1$& $1$ &\\
\CodeAfter
   \begin{tikzpicture} [-, shorten < = 0.7mm, shorten > = 0.7mm]
   \end{tikzpicture}
      \begin{tikzpicture} [-, shorten < = -1.2mm, shorten > = -1.2mm]
   \end{tikzpicture}
    \begin{tikzpicture} [->, shorten < = 1mm, shorten > = 1mm]
   \end{tikzpicture}
\end{NiceTabular}
\caption{
\textbf{Illustration of the dense HW-$k$ encoder (Alg. \ref{alg:hw-enc}) for $n = 6$ and $k = 2$.} 
Data vector $\vecx \in \mathbb{R}^d$, with $d = \binom{6}{2} = 15$. 
The order of the bitstrings is given by Alg. \ref{alg:nxt-hw-bs}. 
Marked bits are represented with an overline. 
Qubit indices are enumerated from right to left, $c_{\vctrl} R^{\bin}_{\tout}$ denotes an $\RBS$ on input qubit $\bin$, output qubit $\tout$ controlled by qubit(s) $\vctrl$, and $\theta_{j}$ given by Eq. \eq{spherical_coords}. 
Unnecessary $\vctrl$ qubits removed by Alg. \ref{alg:rbs-gate-params} are represented by $\bcancel{c_{\vctrl}}$. 
A gate in a given row represents what needs to be added to the circuit to add to the superposition the computational basis state represented by the bitstring in the next row.
}
\label{tab:example6choose2}
\end{table}

\begin{table}[t!]
\setlength{\tabcolsep}{0pt}
\begin{NiceTabular}[columns-width=1.1cm,hlines]{>{\rule[-2mm]{0pt}{.55cm}}*{7}{c}}
   \CodeBefore
   \Body
   $q_6$ & $q_5$ & $q_4$ & $q_3$ & $q_2$ & $q_1$ & Gate\\
   
   $0$ & $0$ & $\color{teal}{0}$ & $\color{blue}{1}$ & $\color{violet}{1}$ & $\color{violet}{1}$ &  $c_{\bcancel{\mathcolor{violet}{1}},\bcancel{\mathcolor{violet}{2}}}\text{R}^{\mathcolor{blue}{3}}_{\mathcolor{teal}{4}}(\theta_1)$\\
   
   $0$ & $0$ & $\color{violet}{1}$ & $\color{teal}{0}$ & $\color{violet}{1}$ & $\color{blue}{1}$ &  $c_{\bcancel{\mathcolor{violet}{2}},\mathcolor{violet}{4}}\text{R}^{\mathcolor{blue}{1}}_{\mathcolor{teal}{3}}(\theta_2)$\\

   ${0}$ & $\color{teal}{0}$ & $\color{blue}{1}$ & $\color{blue}{1}$ & $\color{violet}{1}$ & $\color{teal}{0}$ &  $c_{\bcancel{\mathcolor{violet}{2}}}\text{R}^{\mathcolor{blue}{3},\mathcolor{blue}{4}}_{\mathcolor{teal}{1},\mathcolor{teal}{5}}(\theta_3)$\\

   $0$ & $\color{violet}{1}$ & $\color{teal}{0}$ & $0$ & $\color{violet}{1}$ & $\color{blue}{1}$ &  $c_{\bcancel{\mathcolor{violet}{2}},\mathcolor{violet}{5}}\text{R}^{\mathcolor{blue}{1}}_{\mathcolor{teal}{4}}(\theta_4)$\\

   $\color{teal}{0}$ & $\color{blue}{1}$ & $\color{blue}{1}$ & $\color{teal}{0}$ & $\color{blue}{1}$ & $\color{teal}{0}$ &  $\text{R}^{\mathcolor{blue}{2},\mathcolor{blue}{4},\mathcolor{blue}{5}}_{\mathcolor{teal}{1},\mathcolor{teal}{3},\mathcolor{teal}{6}}(\theta_5)$\\

   $\color{violet}{1}$ & $\color{teal}{0}$ & $\color{teal}{0}$ & $\color{blue}{1}$ & $\mathcolor{teal}{0}$ & $\color{blue}{1}$ &  $c_{\mathcolor{violet}{6}}\text{R}^{\mathcolor{blue}{1},\mathcolor{blue}{3}}_{\mathcolor{teal}{2},\mathcolor{teal}{4},\mathcolor{teal}{5}}(\theta_6)$\\
   
   $1$ & $1$ & $1$ & $0$& $1$& $0$ & \\

\end{NiceTabular}
\caption{
\textbf{Illustration of the sparse encoder (Alg. \ref{alg:sparse-hw-enc}) for $n = 6$ and $s = 7$.} 
Data vector $\bf{y} \in \mathbb{R}^{d}$, with $d = 2^{6}$ and $s = 7$ non-zero entries.
The addresses $b_{j}$ are sorted in ascending order of HW (a necessary condition). As a byproduct, this sorting also minimizes the $\CNOT$ cost of Alg. \ref{alg:rbs-gate-params}. 
The angles $\theta_{j}$ are given by Eq.\eq{spherical_coords}. 
A gate in a given row represents what needs to be added to the circuit to add to the superposition the computational basis state represented by the bitstring in the next row.
The first $6$ elements have Hamming weight $k = 3$, leading to the initial controlled $\gRBS$ gates being HW-preserving. 
Meanwhile, the last one has $k = 4$ and requires a $\gRBS$ with $m \ne m'$ to increase the HW by $1$.
}
\label{tab:example6choose3}
\end{table}

\begin{table}[t!]
\setlength{\tabcolsep}{0pt}
\begin{NiceTabular}[columns-width=0.9cm,hlines]{>{\rule[-2mm]{0pt}{.55cm}}*{8}{c}}
   \CodeBefore
   \Body
   $q_6$ & $q_5$ & $q_4$ & $q_3$ & $q_2$ & $q_1$ & Operator & $k$\\

   $0$ & $0$ & $0$ & $0$ & $0$ & $0$ & $R^6_y(\theta_1)$ & $0$\\

   $\overline{1}$ & $0$ & $0$ & $0$ & $0$ & $0$ & $\LOAD_{B_1}$ & $1$\\

   $\vdots$ & $\vdots$ & $\vdots$ & $\vdots$ & $\vdots$ & $\vdots$ & $\mathcolor{gray}{\{\theta_2,...,\theta_6\}}$ & $1$\\

   $0$ & $1$ & $0$ & $0$ & $0$ & $0$ & $c_5 R^6_y(\theta_7)$ & $1$\\

   $\overline{1}$ & $\overline{1}$ & $0$ & $0$ & $0$ & $0$ & $\LOAD_{B_2}$ & $2$\\

   $\vdots$ & $\vdots$ & $\vdots$ & $\vdots$ & $\vdots$ & $\vdots$ & $\mathcolor{gray}{\{\theta_8,...,\theta_{21}\}}$ & $2$\\

   $0$ & $0$ & $0$ & $0$ & $1$ & $1$ & $c_{1,2} R^3_y(\theta_{22})$ & $2$\\

   $\overline{0}$ & $\overline{0}$ & $\overline{0}$ & $1$ & $1$ & $1$ & $\LOAD_{B_3}$ & $3$\\

   $\vdots$ & $\vdots$ & $\vdots$ & $\vdots$ & $\vdots$ & $\vdots$ & $\mathcolor{gray}{\{\theta_{23},...,\theta_{41}\}}$ & $3$\\

   $1$ & $1$ & $1$ & $0$ & $0$ & $0$ & $c_{4,5,6} R^3_y(\theta_{42})$ & $3$\\

   $\overline{1}$ & $\overline{1}$ & $\overline{1}$ & $\overline{1}$ & $0$ & $0$ & $\LOAD_{B_4}$ & $4$\\

   $\vdots$ & $\vdots$ & $\vdots$ & $\vdots$ & $\vdots$ & $\vdots$ & $\mathcolor{gray}{\{\theta_{43},...,\theta_{56}\}}$ & $4$\\

   $0$ & $0$ & $1$ & $1$ & $1$ & $1$ & $c_{1,2,3,4} R^5_y(\theta_{57})$ & $4$\\

   $\overline{0}$ & $1$ & $1$ & $1$ & $1$ & $1$ & $\LOAD_{B_5}$ & $5$\\

   $\vdots$ & $\vdots$ & $\vdots$ & $\vdots$ & $\vdots$ & $\vdots$ & $\mathcolor{gray}{\{\theta_{58},...,\theta_{62}\}}$ & $5$\\

   $1$ & $1$ & $1$ & $1$ & $1$ & $0$ & $c_{2,3,4,5,6} R^1_y(\theta_{63})$ & $5$\\

   $1$ & $1$ & $1$ & $1$ & $1$ & $1$ &  & $6$\\

\end{NiceTabular}
\caption{
\textbf{Illustration of the binary encoder for $n = 6$ qubits.} 
A gate (or collection of gates) in a given row represents what needs to be added to the circuit to add to the superposition the computational basis state represented by the bitstring in the next row.
The initial bitstring has $k = 0$ and (controlled) $R_{y}$ gates are used to increase the HW by $1$. 
The first bitstring with HW-($k+1$) has a one-bit difference from the final bitstring of the Ehrlich algorithm with HW-$k$.
Operators $\LOAD_{B_{k}}$ fill up the subspace of HW-$k$, and the algorithm stops after reaching $k = n$. 
Angles $\theta_{j}$ are given by Eq.\eq{spherical_coords}, $\theta_{j}$ are used in $R_{y}$ gates and $\mathcolor{gray}{\theta_{j}}$ are used in $\LOAD_{B_{k}}$. 
Total number of parameters is $2^{n} - 1$.
}
\label{tab:hwktobinary}
\end{table}

\end{document}